\numberwithin{equation}{section}
\theoremstyle{plain}
\newtheorem{thm}{Theorem}[section]
\newtheorem{lem}[thm]{Lemma}
\newtheorem{prop}[thm]{Proposition}
\newtheorem{defn}[thm]{Definition}
\newcommand{\norm}[1]{\left\lVert#1\right\rVert}
\title{MIG Median Detectors with  Manifold Filter}
\date{} 
\author{Xiaoqiang Hua{$^{1}$}
~ and Linyu Peng{$^{2}$}\footnote{Corresponding author. E-mail: l.peng@mech.keio.ac.jp.}
\vspace{0.4cm}
\\
{\it 1. College of Meteorology and Oceanography, }\\
{\it National University of Defense Technology, Changsha {410073}, China}\\ 
{\it 2. Department of Mechanical Engineering, Keio University,}\\
{\it  Hiyoshi 3-14-1, Yokohama {223-8522}, Japan}\\
}
\begin{document}
\maketitle
%

\begin{abstract}
In this paper, we propose a class of median-based matrix information geometry (MIG) detectors with a manifold filter and apply them to signal detection in nonhomogeneous environments. As customary, the sample data is assumed to be modeled as Hermitian positive-definite (HPD) matrices, and the geometric median of a set of HPD matrices is interpreted as an estimate of the clutter covariance matrix (CCM). Then, the problem of signal detection can be reformulated as discriminating two points on the manifold of HPD matrices, one of which is the HPD matrix in the cell under test while the other represents the CCM. By manifold filter, we map a set of HPD matrices to another set of HPD matrices by weighting them, that consequently improves the discriminative power by reducing the intra-class distances while increasing the inter-class distances. Three MIG median detectors are designed by resorting to three geometric measures on the matrix manifold, and the corresponding geometric medians are shown to be robust to outliers. Numerical simulations show  the advantage of the proposed MIG median detectors in comparison with their state-of-the-art counterparts as well as the conventional detectors in nonhomogeneous environments.

{\bf  Keywords:} Matrix information geometry (MIG) detector, geometric median, manifold filter, signal detection, clutter covariance matrix 
\end{abstract}




\section{Introduction}

Signal detection in nonhomogeneous environments is an important and fundamental problem in the studies of radar, sonar and communications. Typically, to achieve satisfactory detection performance, the unknown clutter covariance matrix (CCM) should be estimated accurately. The problem of CCM estimation has been extensively studied during the latest decades \cite{GLIGA2019381,8792393,8895800,8747386,7478057,8052571,6166345}. One major limitation of the classical CCM estimators is that the accuracy depends heavily on the  number of homogeneous secondary data, which is independent and identically distributed and is collected from the range cells adjacent to the range cell under test. In practical applications, the number of homogeneous secondary data is often very limited in nonhomogeneous environments. In addition, the secondary data is often contaminated by outliers due to the environmental nonhomogeneity caused by the interfere signal or the variation of clutter power. Both factors make the classical signal detection techniques, such as Kelly's detector \cite{4104190}, adaptive matched filter (AMF) \cite{135446,511809} and adaptive coherence estimator (ACE) \cite{782198} derived by exploiting the CCM estimator suffer from a severe performance degradation.

To address these drawbacks, lots of attempts are made to improve the estimation accuracy of CCM. For instance, De Maio et al. \cite{7426844} exploited a priori information on the covariance structure to design an adaptive detection of point-like target in interference environments. Specifically, a symmetrically structured power spectral density is assumed to agree with the clutter properties. Based on this assumption, three adaptive decision schemes  leveraging on the symmetric power spectral density structure for the interference are proposed, whose performance advantage was confirmed by simulation as well as experimental analysis using  real radar data. In \cite{6166345}, a special structure that the covariance matrix is described as the sum of an unknown positive semi-definite matrix and a matrix proportional to the identity one, and a condition number upper-bound constraint are enforced on the estimated matrix. The constrained structure both accounting for the clutter and the white interference can achieve signal-to-interference-plus-noise ratio (SINR) improvements over their conventional counterparts, especially in the case of limited data. Similar studies concerning the priori information on the structure of the estimated covariance matrix can also be found in \cite{6825699,7605536,6573681,7073532}. Other possible strategies to improve the performance of CCM estimation are mainly focused on resorting to the Bayesian covariance matrix estimation method \cite{4359546,4154721}, using knowledge-based covariance models \cite{5484507,5417154,4267625,5210021,6853408}, or considering shrinkage estimation methods \cite{5484583,6252067}. A fatal drawback on these methods is that the performance gain relies on the known or assumed statistical characteristics of clutter environments, which is often difficult to be acquired in advance in practical applications. 

Recently, a new signal detection scheme designed in the framework of matrix information geometry (MIG) was proposed by Lapuyade-Lahorgue  and Barbaresco \cite{4721049} and developed by Hua et al.  \cite{8000811}, etc. This technique is often referred to as matrix constant false alarm rate detector or MIG detector. MIG studies intrinsic properties of matrix models, and many information processing problems can be transformed into discriminational or optimization problems on matrix spaces equipped with a metric. Based on the MIG theory, an MIG detector designed by using the Riemannian metric or affine invariant Riemannian metric (AIRM) has been successful applied to target detection in coastal X-band and HF surface wave radars \cite{6514112,6450629,4720937}, burg estimation of radar scatter matrix \cite{7842633}, the analysis of statistical characterization \cite{9078971} and the monitoring of wake vortex turbulences \cite{Liu2013,BARBARESCO201054}. The detection framework has been extended by replacing the AIRM distance with other different geometric measures. In \cite{HUA2017106}, the authors developed the MIG detector based on the symmetrized Kullback--Leibler divergence (SKLD) and total Kullback--Leibler divergence. In \cite{HUA2018232,e20040256}, the total Bregman and total Jensen--Bregman divergences were used to design effective MIG detectors. It is noticed  that  MIG detectors designed by using different measures have different detection performances.
As MIG detectors do not rely on the statistical characteristics of clutter environment and take into account the underlying geometry of matrix manifolds, they have the potential to achieve a performance improvement over the conventional detection techniques in nonhomogeneous environments.

 It is well known that a filter can often remove unwanted information from a signal, which can be the noise or the clutter in signal detection. As a consequence, discrimination of the signal and the clutter/noise will be enhanced and hence the signal can be detected more easily. To improve the discriminative power on matrix manifolds, we propose a class of MIG median detectors with a manifold filter, that will improve detection performances by reducing the intra-class distances while increasing the inter-class distances. The main contributions are described as follows:
\begin{enumerate}
  \item We define a map (see Eq. \eqref{eq:manitomani}) on the manifold of Hermitian positive-definite (HPD) matrices by  a manifold filter. Consequently,  each HPD matrix is mapped  to a weighted average of several surrounding  HPD matrices, which is again Hermitian positive-definite.
  \item We reformulate the problem of signal detection into discriminating two points on the HPD matrix manifold, and then design three MIG median  detectors by resorting to the Log-Euclidean metric (LEM), SKLD and the Jensen--Bregman LogDet (JBLD) divergence, respectively. In particular, the median matrices related to the three metrics for a set of HPD matrices are computed by using the fixed-point algorithm and are used as the CCM estimate. Moreover, the anisotropy indices associated with the three metrics are defined to analyze the difference in discriminative power on the HPD manifold and the other manifold.
  \item Simulation results verify that the geometric median matrix is robust and reliable, and the proposed MIG median  detectors with manifold filter can achieve performance improvements over  state-of-the-art counterparts as well as the AMF in nonhomogeneous environments.
\end{enumerate}

The remainder of this paper is organized as follows. Relevant fundamental theories of MIG are introduced in Section \ref{sec:pmig}. Section \ref{sec:pf} is devoted to formulation of  detection problems  on matrix manifolds. Derivation of the detection architecture is presented in Section \ref{sec:det} with numerical simulations given in Section \ref{sec:nr}. Finally, we conclude in Section \ref{sec:con}.

\textit{Notations:} In the sequel, scalars, vectors and matrices are denoted by lowercase, boldface lowercase and boldface uppercase letters, respectively. The symbols $(\cdot)^*$, $(\cdot)^{\operatorname{T}}$ and $(\cdot)^{\operatorname{H}}$ stand for the conjugate, the transpose, and the conjugate transpose, respectively. The operators $\operatorname{det}(\cdot)$ and $\operatorname{tr}(\cdot)$ denote the determinate and trace of a matrix, respectively. The identity $n
\times n$ matrix is denoted by  $\bm{I}_n$ or $\bm{I}$ if no confusion would be caused. We use $\mathbb{C}^n$ and $\mathbb{C}^{n \times n}$ to denote the set of $n$-dimensional complex vectors and $n \times n$ complex matrices, respectively.
%

\section{Preliminary of matrix information geometry}
\label{sec:pmig}

MIG studies intrinsic properties of matrix manifolds, whose elements,  e.g., matrix-valued data, can be discriminated  by a geometric measure. Many problems in information science related to matrix-valued data can be equivalently formulated as measuring the difference of two points in a matrix manifold.  In this section, we briefly introduce the theory of HPD manifolds and define several geometric measures as the discrimination functions using the Frobenius metric.

\subsection{HPD matrix manifolds}

Let $\mathscr{H}(n,\mathbb{C})$ be the set of $n \times n$ Hermitian matrices, given by
\begin{equation}
\mathscr{H}(n,\mathbb{C}) = \left\{ \bm{A} \mid \bm{A} = \bm{A}^{\operatorname{H}}, \bm{A}\in \mathbb{C}^{n\times n} \right\}.
\end{equation}
A Hermitian matrix $\bm{A} \in \mathscr{H}(n,\mathbb{C})$ is positive-definite, expressed as $\bm{A}\succ \bm{0}$, if the quadratic expression $\bm{x}^{\operatorname{H}}\bm{A}\bm{x} > 0$ holds for all $ \bm{x} \in \mathbb{C}^n, \bm{x} \neq {\bf 0}$. The set of $n\times n$ HPD matrices is 
\begin{equation}
\mathscr{P}(n,\mathbb{C}) = \left\{ \bm{A} \mid \bm{A}\succ \bm{0}, \bm{A} \in \mathscr{H}(n,\mathbb{C}) \right\}.
\end{equation}

The subspace $\mathscr{P}(n,\mathbb{C})$ becomes a Riemannian manifold of non-positive curvature equipped with an AIRM \cite{bridson2013metric}, which has been extensively studied during the recent few decades \cite{8624413,8060999,8920217}. The HPD matrix manifold $\mathscr{P}(n,\mathbb{C})$ has many interesting properties \cite{sra2016positive}. For instance,
\begin{enumerate}
  \item Its closure forms a closed, nonpolyhedral, self-dual convex cone.
  \item It embodies a canonical higher-rank symmetric space.
\end{enumerate}

Each element of the matrix manifold $\mathscr{P}(n,\mathbb{C})$ denotes an HPD matrix. Given any two HPD matrices, their difference  on the HPD matrix manifold $\mathscr{P}(n,\mathbb{C})$ can be measured by a metric, a divergence or other measures. Different measures  amount to different geometric structures. Thus, it is important to choose an appropriate measure function  for a specific application. In the following, we introduce several geometric measures on the HPD matrix manifold.

\subsection{Geometric measures on HPD matrix manifolds}



We first introduce the AIRM and AIRM distance on the HPD matrix manifold $\mathscr{P}(n,\mathbb{C})$. The tangent space at a point  $\bm{P}\in \mathscr{P}(n,\mathbb{C})$ is  $T_{\bm{P}}\mathscr{P}(n,\mathbb{C}) = {\bm{P}}\times \mathscr{H}(n,\mathbb{C})$.
A well-known AIRM defined in $T_{\bm{P}}\mathscr{P}(n,\mathbb{C})$ is
\begin{equation}
\langle \bm{A},\bm{B} \rangle_{\bm{P}} := \operatorname{tr}\left(\bm{P}^{-1}\bm{A}\bm{P}^{-1}\bm{B}\right),\quad \bm{A},\bm{B}\in T_{\bm{P}}\mathscr{P}(n,\mathbb{C}). 
\end{equation}


There are natural exponential map and logarithm map on the tangent bundle $T\mathscr{P}(n,\mathbb{C})=\bigcup\limits_{\bm{P}} T_{\bm{P}}\mathscr{P}(n,\mathbb{C})$ using the geodesics. They are related to matrix exponentials and matrix logarithms.   Matrix exponential for a general matrix $\bm{X}$  is given by
\begin{equation}
\exp(\bm{X}) = \sum_{i=0}^\infty \frac{\bm{X}^i}{i!} .
\end{equation}
The following proposition defines the principle logarithm of an invertible matrix.


\begin{lem}[\cite{Hig2008,Moa2005}] \label{lem:aa}
Suppose $\bm{X}$ is an invertible matrix that does not have eigenvalues in the closed negative real line. Then, there exists a unique logarithm with eigenvalues lying in the strip $\{z\in \mathbb{C}\mid -\pi<\operatorname{Im}(z)<\pi\}$. This logarithm is called the principle logarithm and denoted by $\operatorname{Log}\bm{X}$.

Moreover, the matrix $\bm{X}$ satisfies the following properties.
\begin{itemize}
\item[(i)] Both $\bm{X}$ and $\operatorname{Log}\bm{X}$ commute with $[(\bm{X}-\bm{I})s+\bm{I}]^{-1}$ for any real number $s$.
\item[(ii)]   The following identity holds that
\begin{equation*}
\begin{aligned}
\int_0^1[(\bm{X}-\bm{I})s+\bm{I}]^{-2}\operatorname{d}\!s&=(\bm{I}-\bm{X})^{-1}[(\bm{X}-\bm{I})s+\bm{I}]^{-1}\Big|_{s=0}^1\\
&=\bm{X}^{-1}.
\end{aligned}
\end{equation*}
\item[(iii)] Suppose $\bm{A}(\varepsilon)$ is an invertible matrix which does not have eigenvalues lying in the closed real line.
Then, we have
\begin{equation*}
\begin{aligned}
\frac{\operatorname{d}}{\operatorname{d}\!\varepsilon}\operatorname{Log}\bm{A}(\varepsilon)=\int_0^1
[(\bm{A}(\varepsilon)-\bm{I})s+\bm{I}]^{-1}  \frac{\operatorname{d}}{\operatorname{d}\!\varepsilon}\bm{A}(\varepsilon)[(\bm{A}(\varepsilon)-\bm{I})s+\bm{I}]^{-1}\operatorname{d}\!s.
\end{aligned}
\end{equation*}

\end{itemize}
\end{lem}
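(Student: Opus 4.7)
The opening existence/uniqueness claim about the principal logarithm is classical, and I would dispense with it by invoking the cited sources directly. The three numbered identities are the substantive content, and I would prove them in the order given, using the commutativity relation from (i) as the workhorse in the remaining two parts.

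For (i), I would observe that $(\bm{X}-\bm{I})s+\bm{I}=s\bm{X}+(1-s)\bm{I}$ is a polynomial in $\bm{X}$, so it commutes with $\bm{X}$, and hence so does its inverse. Because $\operatorname{Log}\bm{X}$ is produced by the holomorphic functional calculus on a neighborhood of the spectrum of $\bm{X}$ (which avoids the closed negative real axis by hypothesis), it commutes with every rational function of $\bm{X}$; this yields the second commutativity statement. For (ii), I would write $\bm{B}(s):=(\bm{X}-\bm{I})s+\bm{I}$, differentiate via the standard identity $\tfrac{d}{ds}\bm{B}(s)^{-1}=-\bm{B}(s)^{-1}(\bm{X}-\bm{I})\bm{B}(s)^{-1}$, and use (i) to factor $\bm{X}-\bm{I}$ out, reducing the derivative to $-(\bm{X}-\bm{I})\bm{B}(s)^{-2}$. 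Rearranging gives $\bm{B}(s)^{-2}=(\bm{I}-\bm{X})^{-1}\tfrac{d}{ds}\bm{B}(s)^{-1}$; integrating and reading off the antiderivative yields the displayed boundary-term form, and substituting $s=0$ and $s=1$ together with one final use of (i) to commute $\bm{X}^{-1}$ through $(\bm{I}-\bm{X})^{-1}$ collapses the result to $\bm{X}^{-1}$.

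For (iii), my plan rests on the integral representation $\operatorname{Log}\bm{A}=\int_0^1 \bm{B}(s)^{-1}(\bm{A}-\bm{I})\,ds$, which I would derive by integrating $\tfrac{d}{ds}\operatorname{Log}\bm{B}(s)$ from $0$ to $1$; here (i) is needed to justify applying the scalar-style chain rule to a matrix argument. I would then differentiate under the integral sign with respect to $\varepsilon$, producing two terms: $\bm{B}^{-1}\bm{A}'$ from the explicit $\bm{A}-\bm{I}$ factor, and $-s\,\bm{B}^{-1}\bm{A}'\bm{B}^{-1}(\bm{A}-\bm{I})$ from differentiating $\bm{B}^{-1}$, the latter after using (i) once more to commute $\bm{A}-\bm{I}$ past $\bm{B}^{-1}$. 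The crucial cancellation is the algebraic identity $\bm{I}-s(\bm{A}-\bm{I})\bm{B}^{-1}=[\bm{B}-s(\bm{A}-\bm{I})]\bm{B}^{-1}=\bm{B}^{-1}$, which collapses the sum to $\bm{B}^{-1}\bm{A}'\bm{B}^{-1}$ under the integral, matching the claimed formula.

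I expect the only real obstacle to be (iii): part (i) is essentially tautological and (ii) is a one-line antidifferentiation, but (iii) requires introducing an integral representation of $\operatorname{Log}\bm{A}$ that is \emph{not} stated in the lemma, and the final cancellation succeeds only because $\bm{A}-\bm{I}$ commutes with $\bm{B}$ while $\bm{A}'$ in general does not commute with either. Justifying differentiation under the integral is routine, since the invertibility and spectral hypotheses on $\bm{A}(\varepsilon)$ keep the integrand jointly smooth in $(\varepsilon,s)$ on $[0,1]$.
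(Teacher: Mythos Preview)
Your argument is sound in all three parts. The commutativity in (i) via the functional calculus is exactly right; the antidifferentiation in (ii) is clean; and in (iii) the integral representation
\[
\operatorname{Log}\bm{A}=\int_0^1[(\bm{A}-\bm{I})s+\bm{I}]^{-1}(\bm{A}-\bm{I})\,\operatorname{d}\!s
\]
together with the cancellation $\bm{B}-s(\bm{A}-\bm{I})=\bm{I}$ does collapse the two terms to $\bm{B}^{-1}\bm{A}'\bm{B}^{-1}$ as you describe. One small remark: the ``scalar-style chain rule'' you invoke for $\tfrac{d}{ds}\operatorname{Log}\bm{B}(s)$ is not literally a consequence of (i); it follows more directly from the fact that the family $\{\bm{B}(s)\}_{s\in[0,1]}$ is a commuting one (all polynomials in $\bm{A}$), so the spectral/functional calculus reduces the computation to the scalar case eigenvalue by eigenvalue. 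This is the same underlying idea as (i), so the slip is purely cosmetic.

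As for comparison with the paper: there is nothing to compare against. The paper states Lemma~\ref{lem:aa} as a cited result from \cite{Hig2008,Moa2005} and gives no proof of its own; it simply uses the lemma downstream (e.g.\ in computing the LEM anisotropy index). Your proposal therefore supplies an argument the paper does not, and it matches the standard derivations in the cited sources: (iii) in particular is the Fr\'echet-derivative formula for the matrix logarithm based on the integral representation above, which is exactly how Higham derives it.
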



The AIRM distance of two points $\bm{X}, \bm{Y} \in \mathscr{P}(n,\mathbb{C})$ is defined as  length of the local geodesic connecting them, which is  given by
\begin{equation}
\begin{aligned}
d_R^2(\bm{X},\bm{Y}) &= \norm{ \operatorname{Log}\left(\bm{X}^{-1/2}\bm{Y}\bm{X}^{-1/2}\right) }^2 \\
&= \sum_{i=1}^n \ln^2\lambda_i,
\end{aligned}
\end{equation}
where $\lambda_i$ is the $i$-th eigenvalue of $\bm{X}^{-1/2}\bm{Y}\bm{X}^{-1/2}$. Here $\norm{\cdot}$ is the Frobenius norm $\norm{\bm{A}}^2=\operatorname{tr}(\bm{A}^{\operatorname{H}}\bm{A})$ of a matrix $\bm{A}$, induced from the Frobenius metric.

Unfortunately,  the computational cost of the AIRM distance is expensive in practical applications. An alternative selection, the Log-Euclidean metric (LEM), is derived by defining the metric in the tangent space $T_{\bm{P}}\mathscr{P}(n,\mathbb{C})\ni \bm{A},\bm{B}$ using the Frobenius metric as follows:
\begin{equation}
\langle \bm{A},\bm{B} \rangle_{\bm{P}}^{\operatorname{LE}} := \langle D_{\bm{A}}\operatorname{Log}\bm{P}, D_{\bm{B}}\operatorname{Log}\bm{P} \rangle,
\end{equation}
where $D_{\bm{A}}\operatorname{Log}\bm{P}$ denotes the directional derivative of the logarithm function along direction $\bm{A}$ at a point $\bm{P}$. The LEM metric is more efficient in the sense of computational cost, compared with the AIRM metric. In addition to them, many geometric measures that possess good properties have also been introduced on the manifold $\mathscr{P}(n,\mathbb{C})$, for instance, the JBLD divergence \cite{6378374}, the KLD  \cite{8707051}, and the SKLD  \cite{HUA2017106}. Among these geometric measures, only the AIRM and LEM metrics lead to a true geodesic distance on the manifold $\mathscr{P}(n,\mathbb{C})$. It should also be noted that only the AIRM, the JBLD divergence and the SKLD are invariant w.r.t. affine transformations.
In this paper, we will mainly be focused on the LEM, the JBLD divergence and the SKLD, defined in the manifold $\mathscr{P}(n,\mathbb{C})$.
\begin{defn}
The LEM distance between two matrices $\bm{X}, \bm{Y} \in \mathscr{P}(n,\mathbb{C})$ is defined as
\begin{equation}
d_{L}^2(\bm{X},\bm{Y}) = \norm{ \operatorname{Log}\bm{X} - \operatorname{Log}\bm{Y} }^2.
\end{equation}
\end{defn}

\begin{defn}
The JBLD divergence of two matrices $\bm{X}, \bm{Y} \in \mathscr{P}(n,\mathbb{C})$ is defined as
\begin{equation}
d_J^2(\bm{X},\bm{Y}) = \operatorname{ln}\operatorname{det}\left(\frac{\bm{X}+\bm{Y}}{2}\right) - \frac{1}{2}\operatorname{ln}\operatorname{det}(\bm{X}\bm{Y}).
\end{equation}
\end{defn}

\begin{defn}
The SKLD divergence of two matrices $\bm{X}, \bm{Y} \in \mathscr{P}(n,\mathbb{C})$ is defined as
\begin{equation}
d_S^2(\bm{X},\bm{Y}) = \operatorname{tr}\left( \bm{Y}^{-1}\bm{X} + \bm{X}^{-1}\bm{Y} - 2\bm{I}\right).
\end{equation}
\end{defn}

\section{Problem formulation}
\label{sec:pf}

Consider the detection of a target signal embedding into a clutter environment. We assume that a set of $K$ secondary data, free of signal components, is available. As customary, the detection problem can be formulated as the following binary hypotheses test,
\begin{equation}
\begin{cases}
\mathcal{H}_0: &
\begin{cases}
\bm{x} = \bm{c}  \\
\bm{x}_k = \bm{c}_k , k=1,2,\ldots,K
\end{cases}\vspace{0.2cm}\\
\mathcal{H}_1: &
\begin{cases}
\bm{x} = \bm{c}  \\
\bm{x}_k = \alpha \bm{p} + \bm{c}_k , k=1,2,\ldots,K
\end{cases}\\
\end{cases}
\end{equation}
where $\bm{x}\in\mathbb{C}^n$ and $\bm{x}_k \in \mathbb{C}^n, k=1,2,\ldots,K$ denote the primary data and secondary data, respectively, $\bm{c}\in\mathbb{C}^n$ and $\bm{c}_k\in\mathbb{C}^n, k=1,2,\ldots,K$ are the clutter data, $\alpha\in\mathbb{R}$ is a deterministic but unknown real number, which stands for the target property and channel propagation effects, and  $\bm{p}\in\mathbb{C}^n$ is a known signal steering vector. The purpose of signal detection is to discriminate the target signal from the clutter data. Generally, this problem is considered as the estimation of the logarithm likelihood function of the statistical models of $\mathcal{H}_0$ and $\mathcal{H}_1$, and can be solved by resorting to the generalized likelihood ratio test criteria. One crucial drawback of this method is that it treats the statistical models independently regardless of their intrinsic connections due to the manifold structure. At the moment, we assume that the sample data $\bm{x}$ is modeled as an HPD matrix and then reformulate the problem of signal detection as discriminating the target signal and clutter on the HPD matrix manifold $\mathscr{P}(n,\mathbb{C})$. Specifically, the decision on the presence or absence of a target is made by comparing the geometric distance (or difference) between the matrix in the range cell under test (CUT) $\bm{R}_{CUT}$ and the estimated CCM $\bm{\widehat{R}}$ with a given threshold $\gamma$. Consequently, the rule of signal detection can be formulated as follows:
\begin{equation}
d(\bm{R}_{CUT},\bm{\widehat{R}}) \mathop{\gtrless}\limits_{\mathcal{H}_0}^{\mathcal{H}_1} \gamma,
\label{eq:detect_rule}
\end{equation}
where $d(\cdot, \cdot)$ denotes a geometric measure, and $\gamma$ is the detection threshold that is set to maintain a constant false alarm rate. It can be noted from \eqref{eq:detect_rule} that the detection performance greatly depends on the  measure function used to discriminate two points on the HPD matrix manifold. Moreover,  robustness of the estimated CCM to outliers also has an effect on the detection performance. The commonly used CCM estimator is the sample covariance matrix (SCM). For a set of secondary data $\{ \bm{x}_1,\bm{x}_2, \dots ,\bm{x}_K \}$, the SCM is the maximum likelihood estimation (MLE) of secondary data given by
\begin{equation}
\bm{\widehat{R}}_{SCM} = \frac{1}{K}\sum_{k=1}^K \bm{x}\bm{x}^{\operatorname{H}}.
\label{eq:SCM}
\end{equation}

Note that the SCM estimator can be viewed as the arithmetic mean of $K$ autocorrelation matrices $\{ \bm{x}_k\bm{x}_k^{\operatorname{H}} \}_{k=1}^K$ with rank one. The matrix $\bm{x}_k\bm{x}_k^{\operatorname{H}}$ is the MLE of the secondary data $\bm{x}_k$. To analyze the matrix data on the HPD matrix manifold, for any a sample data $\bm{x}=(x_0,x_1,\ldots,x_{n-1})^{\operatorname{T}}$, a suboptimal estimation that is positive-definite can be derived as
\begin{equation}\label{eq:subopt}
\begin{aligned}
\bm{\widehat{R}} =
\left[ \begin{matrix}
   {r_0} & {r^*_1} & \cdots  & {r^*_{n-1}}  \\
   {r_1} & {r_0} & \cdots  & {r^*_{n-2}}  \\
   \vdots  & \ddots  & \ddots  & \vdots   \\
   {r_{n-1}} & \cdots  & {r_1} & {r_0}  \\
\end{matrix} \right]
\end{aligned}
\end{equation}
with entries
\begin{equation}
r_l = \operatorname{E}[x_ix_{i+l}^*], \quad 0\leq l \leq n-1, 0\leq i \leq n-l-1,
\end{equation}
where $x_l$ is the $l$-th correlation coefficient of the data $\bm{x}$ and $\bold{E}[\cdot]$ denotes the statistical expectation. According to the ergodicity of stationary Gaussian process, the correlation coefficient of the data $\bm{x}$ can be calculated by averaging over time instead of its statistical expectation, as follows:
\begin{equation}
r_l = \frac{1}{n}\sum_{i=0}^{n-l-1} x_ix_{i+l}^*, \quad 0 \leq l \leq n-1.
\end{equation}

The estimation in \eqref{eq:subopt} allow us to model each sample data as an HPD matrix. Then, the CCM can be estimated according to a set of HPD matrices $\{ \bm{\widehat{R}}_1,\bm{\widehat{R}}_2,\ldots,\bm{\widehat{R}}_K \}$. It is known that the SCM is sensitive to outliers in nonhomogeneous environments.  Take the nonlinear geometric structure of HPD matrix manifolds into consideration, a robust CCM estimator can be chosen as the geometric mean or median. The geometric mean estimator has been discussed in details \cite{e20040219}. In this paper, we devote to studying geometric median estimators. The geometric median is defined as follows.
\begin{defn}
For HPD matrices $\{ \bm{\widehat{R}}_1,\bm{\widehat{R}}_2,\ldots,\bm{\widehat{R}}_K \}$, the geometric median estimator associated with a geometric measure is the unique solution of the minimizer of the sum of the geometric measure
\begin{equation}
\bm{\widehat{R}}_g = \underset{\bm{R} \in \mathscr{P}(n,\mathbb{C})}{\operatorname{argmin}} \sum_{k=1}^K d(\bm{R},\bm{\widehat{R}}_k).
\label{eq:def_gm}
\end{equation}
\end{defn}

Based on the geometric median estimator $\bm{\widehat{R}}_g$, we can design the MIG detector by replacing the CCM with $\bm{\widehat{R}}_g$. Different geometric measures possess different discriminative power. Besides, the robustness of different geometric estimators to outliers are also different. These will be clarified  later.

\section{The proposed detectors}
\label{sec:det}

According to the principle of signal detection mentioned above, we present the block-scheme of the proposed detector in Fig. \ref{Block_scheme}. We first model each sample data as an HPD matrix, and then map  each HPD matrix to another HPD matrix. The detection statistic is computed by the geometric distance between the HPD matrix in the CUT and the CCM estimated by the geometric median related to a measure. Finally, the decision on the existence or absence of a target is made by comparing the detection statistic with a given threshold. Here, the manifold-to-manifold map and the geometric median are the two essential tools for designing the detector. In the following, we introduce a manifold-to-manifold map by a manifold filter and deduce geometric medians for a set of HPD matrices. We also define the anisotropy index to analyze the discriminative power of a geometric measure.

\begin{figure}[H]
  \centering
  \includegraphics[width = 8.5cm]{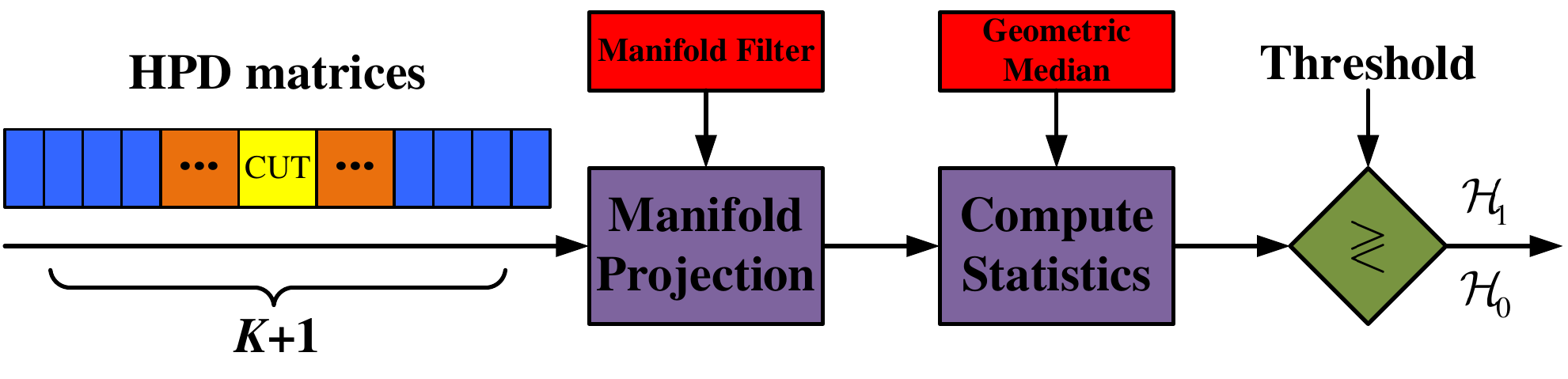}\\
  \caption{Block-scheme of the proposed detector.}
  \label{Block_scheme}
\end{figure}

\subsection{Manifold filter: Weighting a set of HPD matrices}
\label{sec:mantoman}
In this subsection, a weighting map in the HPD manifold is defined to derive a  more discriminative measure of a given set of HPD matrices.

 Suppose that the HPD matrix modeled using the sample data contains some redundant information, which would often limit the discriminative power of the intra-class and the inter-class. As illustrated by Fig. \ref{Fig:Manifold Filter}, given the data collected from two classes on an HPD manifold, marked as the blue circle and the red square, we propose a {\it manifold filter} that cuts down the redundant information contained in the sample data to improve the discriminative power. The improvement is achieved by reducing the intra-class distance while increasing the inter-class distance.

\begin{figure}[H]
  \centering
  \includegraphics[width=8.5cm]{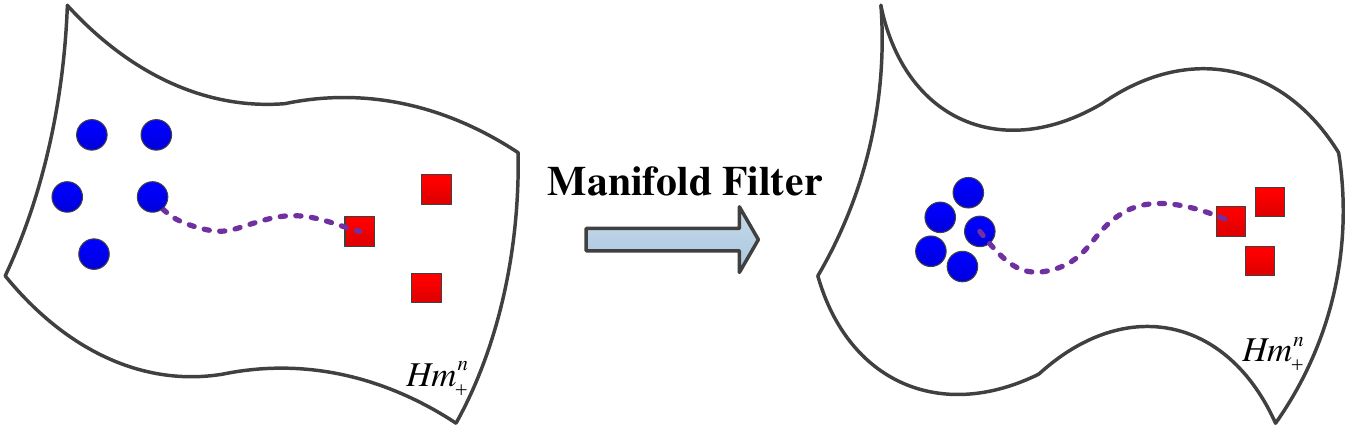}\\
  \caption{Manifold filter.}
  \label{Fig:Manifold Filter}
\end{figure}

The manifold filter is designed by mapping an HPD matrix to the weighted average of its surrounding HPD matrices. Suppose that $\bm{R}$ is an HPD matrix in one cell, and its surrounding $m$ HPD matrices is denoted as $\{\bm{R}_1,\bm{R}_2,\ldots,\bm{R}_m\}$. The weighted average matrix $\bm{\widetilde{R}}$ is given by
\begin{equation}
\bm{\widetilde{R}} := \mathcal{G}\{w_1 \bm{R}_1, w_2 \bm{R}_2, \ldots, w_m \bm{R}_m\},
\label{eq:pre_processing0}
\end{equation}
where 
$w_i$ is the $i$-th weight. The weighted average matrix $\widetilde{\bm{R}}$ can be chosen as the arithmetic mean, geometric mean or geometric median. In each case, the matrix $\widetilde{\bm{R}}$ is still HPD and this defines a map from a HPD matrix to another HPD matrix: $\bm{R}\mapsto\widetilde{\bm{R}}$. Analogous to the filter in image denoising, the arithmetic mean can be viewed as a mean filter \cite{Gavaskar2019Fast} and the geometric mean or median can be viewed as  nonlocal mean filters \cite{8517121}. In this paper, we choose the arithmetic mean as the weighted average matrix. Other means or medians will be reported separately. Thus, the weighted average matrix reads
\begin{equation}
\bm{\widetilde{R}} = \sum_{i=1}^m w_i \bm{R}_i,
\label{eq:pre_processing}
\end{equation}
where  a further constraint of the weights are imposed, namely
\begin{equation}
\sum_{i=1}^m w_i = 1.
\end{equation}
We define each weight $w_i$ to represent the similarity between the matrix $\bm{R}_i$ and the matrix $\bm{R}$ such that the smaller their difference, the greater the weight. For simplicity, we use an exponential function to define the weights as follows
\begin{equation}
w_i = \frac{\exp\left(-d^2(\bm{R}_i,\bm{R})/h^2\right)}{\sum\limits_{j=1}^m \exp\left(-d^2(\bm{R}_j,\bm{R})/h^2\right)},
\label{eq:weight}
\end{equation}
where $h$ is a constant parameter. Substituting \eqref{eq:weight} into \eqref{eq:pre_processing},  we  obtain
\begin{equation}\label{eq:manitomani}
\bm{\widetilde{R}} = \sum_{i=1}^m \frac{\exp\left(-d^2(\bm{R}_i,\bm{R})/h^2\right)}{\sum\limits_{j=1}^m \exp\left(-d^2(\bm{R}_j,\bm{R})/h^2\right)} \bm{R}_i.
\end{equation}

In the manifold filter, $m$ and $h$ are free parameters, which can be adjusted properly in practical applications to  monitor  the filter. In general, bigger $m$ leads to better filter results as more information is included. The parameter $h$ is related to the influence of data on the filter weight: a bigger $h$ usually means a smaller variation of the weights.


\subsection{Geometric medians}

Geometric median of the weighted HPD matrices is considered as the estimate of CCM. According to  \eqref{eq:def_gm}, we can obtain the geometric median related to a measure for a set of HPD matrices. The fixed-point iteration algorithms for computing the LEM, JBLD divergence  and SKLD medians are given below. Be noted that fixed-point iteration for a given fixed-point problem is not unique.

To find the median minimizing the function
\begin{equation}
F(\bm{R})=\sum_{k=1}^K d(\bm{R},\bm{\widetilde{R}}_k),\quad \bm{R} \in \mathscr{P}(n,\mathbb{C}),
\end{equation}
it suffices to solve the vanishment of its gradient, i.e., $\nabla F(\bm{R})=\bm{0}$, which is defined using the Frobenius metric as (see, e.g., \cite{huaetal2020})
\begin{equation}\label{def:gra}
\langle \nabla F(\bm{R}),\bm{X}\rangle:=\frac{\operatorname{d}}{\operatorname{d}\!\varepsilon}\big|_{\varepsilon=0} F(\bm{R}+\varepsilon\bm{X}), \quad \forall\bm{X}\in\mathscr{P}(n,\mathbb{C}).
\end{equation}
In this paper, we show  fixed-point iteration algorithms for solving the algebraic equation $\nabla F(\bm{R})=\bm{0}$. Note that the  algorithms may not be unique.

\begin{prop}
The LEM median of a set of HPD matrices $\{ \bm{\widetilde{R}}_1,\bm{\widetilde{R}}_2,\ldots,\bm{\widetilde{R}}_K \}$ can be computed   via the following fixed-point iteration
\begin{equation}
\begin{aligned}
\bm{R}_{t+1} &= \exp\left\{ \sum_{k=1}^K\frac{\operatorname{Log}\bm{\widetilde{R}}_k}{\norm{ \operatorname{Log}\bm{R}_t - \operatorname{Log}\bm{\widetilde{R}}_k }}\left(\sum_{k=1}^K \frac{1}{\norm{ \operatorname{Log}\bm{R}_t - \operatorname{Log}\bm{\widetilde{R}}_k }}\right)^{-1} \right\},
\end{aligned}
\end{equation}
where $t$ is a nonnegative integer. Uniqueness of the median is proved in  \cite{CCMV2013}.
\end{prop}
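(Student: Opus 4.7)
The plan is to derive the fixed-point equation by setting the Frobenius gradient of $F(\bm{R}) = \sum_{k=1}^K \norm{\operatorname{Log}\bm{R} - \operatorname{Log}\bm{\widetilde{R}}_k}$ to zero and recognizing the result as Weiszfeld's iteration for the Euclidean median, carried out in the logarithmic image of the HPD manifold. Conceptually, since the LEM distance is the pullback of the Frobenius distance under the diffeomorphism $\operatorname{Log}:\mathscr{P}(n,\mathbb{C})\to\mathscr{H}(n,\mathbb{C})$, minimizing $F$ is equivalent to minimizing $\widetilde{F}(\bm{Z}) = \sum_{k=1}^K \norm{\bm{Z} - \operatorname{Log}\bm{\widetilde{R}}_k}$ over Hermitian $\bm{Z}$, which is the classical Weiszfeld median problem on a Euclidean (Hilbert) space. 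To stay within the formalism of \eqref{def:gra}, however, I proceed directly.

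For an arbitrary Hermitian direction $\bm{X}$, differentiating each summand gives
\[
\frac{\operatorname{d}}{\operatorname{d}\!\varepsilon}\Big|_{\varepsilon=0} F(\bm{R}+\varepsilon\bm{X}) = \sum_{k=1}^K \frac{\operatorname{tr}\bigl(\bm{Y}_k\, D_{\bm{X}}\operatorname{Log}\bm{R}\bigr)}{\norm{\bm{Y}_k}}, \qquad \bm{Y}_k := \operatorname{Log}\bm{R} - \operatorname{Log}\bm{\widetilde{R}}_k.
\]
By Lemma \ref{lem:aa}(iii) and cyclic invariance of the trace, this quantity equals $\operatorname{tr}\bigl(\bm{X}\, \bm{\Phi}(\bm{W})\bigr)$, where $\bm{\Phi}(\bm{W}) := \int_0^1 G_s\bm{W}G_s\, \operatorname{d}\!s$, $G_s := [(\bm{R}-\bm{I})s+\bm{I}]^{-1}$, and $\bm{W} := \sum_{k=1}^K \bm{Y}_k / \norm{\bm{Y}_k}$ is Hermitian. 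Requiring the expression to vanish for every Hermitian $\bm{X}$, together with the fact that $\bm{\Phi}(\bm{W})$ is itself Hermitian, yields $\bm{\Phi}(\bm{W}) = \bm{0}$.

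The crux is the reduction $\bm{\Phi}(\bm{W}) = \bm{0} \Longrightarrow \bm{W} = \bm{0}$. Since $\bm{R}\succ\bm{0}$, the matrix $s\bm{R} + (1-s)\bm{I}$ is HPD for every $s\in[0,1]$, and hence so is $G_s$. Consequently
\[
0 = \operatorname{tr}\bigl(\bm{W}\, \bm{\Phi}(\bm{W})\bigr) = \int_0^1 \norm{G_s^{1/2}\bm{W}G_s^{1/2}}^2\, \operatorname{d}\!s
\]
forces $\bm{W}=\bm{0}$, that is, $\sum_{k=1}^K (\operatorname{Log}\bm{R} - \operatorname{Log}\bm{\widetilde{R}}_k)/\norm{\operatorname{Log}\bm{R} - \operatorname{Log}\bm{\widetilde{R}}_k} = \bm{0}$. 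Solving this linear equation for $\operatorname{Log}\bm{R}$ and applying the matrix exponential produces exactly the displayed fixed-point form; evaluating the right-hand side at the current iterate $\bm{R}_t$ and defining the result as $\bm{R}_{t+1}$ yields the stated algorithm, whose fixed point is unique by \cite{CCMV2013}. I expect the main obstacle to be the reduction step above, namely verifying the positive-definiteness of the operator $\bm{\Phi}$ on Hermitian matrices; once that is granted, the rearrangement into the Weiszfeld form is routine.
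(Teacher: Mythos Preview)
Your argument is correct. The paper's own proof is essentially a one-line citation: it invokes \cite{CCMV2013} for the characterization
\[
\sum_{k=1}^K \frac{\operatorname{Log}\bm{\widetilde{R}}_k - \operatorname{Log}\bm{R}}{\norm{\operatorname{Log}\bm{\widetilde{R}}_k - \operatorname{Log}\bm{R}}}=\bm{0}
\]
and then reads off the fixed-point iteration. This is precisely the pullback shortcut you describe in your first paragraph: because $\operatorname{Log}$ is an isometry from $(\mathscr{P}(n,\mathbb{C}),d_L)$ onto $(\mathscr{H}(n,\mathbb{C}),\norm{\cdot})$, the stationarity condition in log-coordinates is the standard Euclidean one, with no need to differentiate $\operatorname{Log}$ at all.

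You instead take the longer but fully self-contained route through the definition \eqref{def:gra} and Lemma~\ref{lem:aa}(iii), which forces you to confront the operator $\bm{\Phi}(\bm{W})=\int_0^1 G_s\bm{W}G_s\,\operatorname{d}\!s$ and verify its injectivity on Hermitian matrices via the positivity argument $\operatorname{tr}(\bm{W}\bm{\Phi}(\bm{W}))=\int_0^1\norm{G_s^{1/2}\bm{W}G_s^{1/2}}^2\,\operatorname{d}\!s$. That step is sound and is exactly the ``main obstacle'' you anticipated; it buys you a proof that does not outsource the key stationarity condition to an external reference, at the cost of a computation the paper avoids entirely by working in log-coordinates from the start.
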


\begin{proof}
It was shown in \cite{CCMV2013} that median related to  the function
\begin{equation}
F(\bm{R}) = \sum_{k=1}^K \norm{ \operatorname{Log}\bm{\widetilde{R}}_k -  \operatorname{Log}\bm{R}  }
\end{equation}
is characterized by
%
%
%
\begin{equation}
\sum_{k=1}^K \frac{\operatorname{Log}\bm{\widetilde{R}}_k - \operatorname{Log}\bm{R}}{\norm{\operatorname{Log}\bm{\widetilde{R}}_k - \operatorname{Log}\bm{R}}}=\bm{0}.
\end{equation}
The fixed-point iteration follows immediately. 

%
\end{proof}

\begin{prop}
The JBLD median  of a set of HPD matrices $\{ \bm{\widetilde{R}}_1,\bm{\widetilde{R}}_2,\ldots,\bm{\widetilde{R}}_K \}$ can be computed  by using the fixed-point iteration
\begin{equation}
\begin{aligned}
\bm{R}_{t+1} = \frac{1}{2}\left(\sum_{k=1}^K \frac{1} {\sqrt{\operatorname{ln}\operatorname{det}\left(\frac{\bm{R}_t+\bm{\widetilde{R}}_k}{2}\right) - \frac12 \operatorname{ln}\operatorname{det}\left(\bm{R}_t\bm{\widetilde{R}}_k\right)}} \right)  
\left( \sum_{k=1}^K \frac{\Big(\bm{R}_t+\bm{\widetilde{R}}_k\Big)^{-1}}{\sqrt{\operatorname{ln}\operatorname{det}\left(\frac{\bm{R}_t+\bm{\widetilde{R}}_k}{2}\right) - \frac12 \operatorname{ln}\operatorname{det}\left(\bm{R}_t\bm{\widetilde{R}}_k\right)}} \right)^{-1}.
\end{aligned}
\end{equation}
\end{prop}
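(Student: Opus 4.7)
The plan is to mimic the structure of the LEM median proof: express $F(\bm{R})=\sum_k d_J(\bm{R},\widetilde{\bm{R}}_k)$, compute its gradient using the directional-derivative definition \eqref{def:gra}, characterize critical points by $\nabla F(\bm{R})=\bm{0}$, and then algebraically rearrange that stationarity equation so that $\bm{R}$ appears alone on the left, producing the displayed fixed-point map.

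First I would write $d_J(\bm{R},\widetilde{\bm{R}}_k)=\sqrt{f_k(\bm{R})}$ with
\begin{equation*}
f_k(\bm{R})=\ln\det\!\left(\tfrac{\bm{R}+\widetilde{\bm{R}}_k}{2}\right)-\tfrac{1}{2}\ln\det(\bm{R}\widetilde{\bm{R}}_k),
\end{equation*}
and compute, for arbitrary $\bm{X}\in\mathscr{H}(n,\mathbb{C})$, the directional derivative
\begin{equation*}
\tfrac{\operatorname{d}}{\operatorname{d}\!\varepsilon}\big|_{\varepsilon=0}f_k(\bm{R}+\varepsilon\bm{X})=\operatorname{tr}\!\left[\left((\bm{R}+\widetilde{\bm{R}}_k)^{-1}-\tfrac{1}{2}\bm{R}^{-1}\right)\bm{X}\right],
\end{equation*}
using the standard identity $\tfrac{\operatorname{d}}{\operatorname{d}\!\varepsilon}\big|_{\varepsilon=0}\ln\det(\bm{A}+\varepsilon\bm{X})=\operatorname{tr}(\bm{A}^{-1}\bm{X})$ together with $\ln\det(\bm{R}\widetilde{\bm{R}}_k)=\ln\det\bm{R}+\ln\det\widetilde{\bm{R}}_k$. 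By \eqref{def:gra} this reads $\nabla f_k(\bm{R})=(\bm{R}+\widetilde{\bm{R}}_k)^{-1}-\tfrac{1}{2}\bm{R}^{-1}$, and the chain rule on $\sqrt{f_k}$ yields
\begin{equation*}
\nabla d_J(\bm{R},\widetilde{\bm{R}}_k)=\frac{1}{2\,d_J(\bm{R},\widetilde{\bm{R}}_k)}\left[(\bm{R}+\widetilde{\bm{R}}_k)^{-1}-\tfrac{1}{2}\bm{R}^{-1}\right].
\end{equation*}

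Summing over $k$ and imposing $\nabla F(\bm{R})=\bm{0}$ gives the stationarity condition
\begin{equation*}
\sum_{k=1}^{K}\frac{(\bm{R}+\widetilde{\bm{R}}_k)^{-1}}{d_J(\bm{R},\widetilde{\bm{R}}_k)}=\frac{1}{2}\bm{R}^{-1}\sum_{k=1}^{K}\frac{1}{d_J(\bm{R},\widetilde{\bm{R}}_k)}.
\end{equation*}
Inverting both sides, solving for $\bm{R}$, and substituting back the explicit form of $d_J$ from the definition reproduces exactly the displayed update rule with $\bm{R}$ on the right replaced by $\bm{R}_t$ and on the left by $\bm{R}_{t+1}$. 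The fixed-point formulation is then just reading off this equation iteratively, precisely parallel to how the LEM proposition was concluded from its stationarity condition.

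The only genuinely delicate step is the derivative computation for $\ln\det\!\big(\tfrac{\bm{R}+\widetilde{\bm{R}}_k}{2}\big)$ on the Hermitian cone: I would note that the additive constant $-n\ln 2$ drops out, and that although the chain rule gives a factor $\tfrac{1}{2}$ from the rescaling, the inverse of $\tfrac{\bm{R}+\widetilde{\bm{R}}_k}{2}$ contributes a compensating factor $2$, so the net gradient is $(\bm{R}+\widetilde{\bm{R}}_k)^{-1}$ (not $2(\bm{R}+\widetilde{\bm{R}}_k)^{-1}$ nor $\tfrac{1}{2}(\bm{R}+\widetilde{\bm{R}}_k)^{-1}$). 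This is the easiest place to drop a factor and thereby misidentify the coefficient $\tfrac{1}{2}$ in front of the final update; I would verify it by tracking the computation in trace form through \eqref{def:gra} rather than manipulating formal matrix derivatives. Uniqueness/convergence of the iteration is not claimed in the statement, so no separate argument beyond the stationarity characterization is needed here, although one could remark that geodesic convexity of $d_J$ on $\mathscr{P}(n,\mathbb{C})$ guarantees uniqueness of the minimizer.
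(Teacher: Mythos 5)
Your proposal is correct and takes essentially the same route as the paper: compute the gradient of $F$ termwise via the directional-derivative definition \eqref{def:gra}, set $\nabla F(\bm{R})=\bm{0}$, and rearrange to isolate $\bm{R}^{-1}$, with your per-term gradient $\frac{1}{2d_k}\bigl[(\bm{R}+\widetilde{\bm{R}}_k)^{-1}-\tfrac12\bm{R}^{-1}\bigr]$ agreeing exactly with the paper's $\frac{1}{4d_k}\bigl[\bigl(\tfrac{\bm{R}+\widetilde{\bm{R}}_k}{2}\bigr)^{-1}-\bm{R}^{-1}\bigr]$ once the scaled inverse is expanded, which is precisely the factor bookkeeping you flagged. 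The only (harmless) excess is your closing remark on uniqueness via geodesic convexity, which the paper neither claims nor needs for this proposition.
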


\begin{proof}
Denote $F(\bm{R})$ as the function to be minimized, namely
\begin{equation}
F(\bm{R}) = \sum_{k=1}^K \sqrt{\operatorname{ln}\operatorname{det}\left(\frac{\bm{R}+\bm{\widetilde{R}}_k}{2}\right) - \frac{1}{2}\operatorname{ln}\operatorname{det}\left(\bm{R}\bm{\widetilde{R}}_k\right)}.
\end{equation}
Its gradient can be computed via the definition \eqref{def:gra} and we obtain
\begin{equation}
\nabla F(\bm{R})=\frac{1}{4}\sum_{k=1}^K\frac{\left(\frac{\bm{R}+\widetilde{\bm{R}}_k}{2}\right)^{-1}-\bm{R}^{-1}}{\sqrt{\operatorname{ln}\operatorname{det}\Big(\frac{\bm{R}+\bm{\widetilde{R}}_k}{2}\Big) - \frac{1}{2}\operatorname{ln}\operatorname{det}\left(\bm{R}\bm{\widetilde{R}}_k\right)}}.
\end{equation}
Moving the $\bm{R}^{-1}$ terms of the equation $\nabla F(\bm{R})=\bm{0}$ to one side, we obtain the fixed-point iteration.

\end{proof}


\begin{prop}
The SKLD median of a set of HPD matrices $\{ \bm{\widetilde{R}}_1,\bm{\widetilde{R}}_2,\ldots,\bm{\widetilde{R}}_K \}$ can be computed by the fixed-point iteration
\begin{equation}
\begin{aligned}
\bm{R}_{t+1} =\bm{P}_t^{-1/2}\left(\bm{P}_t^{1/2}\bm{Q}_t\bm{P}_t^{1/2}\right)^{1/2}\bm{P}_t^{-1/2},
\end{aligned}
\end{equation}
where
\begin{equation}
\bm{P}_t=\sum_{k=1}^K\frac{\widetilde{\bm{R}}^{-1}_k}{\sqrt{\operatorname{tr}\left(\bm{\widetilde{R}}_k^{-1}\bm{R}_t+\bm{R}_t^{-1}\bm{\widetilde{R}}_k\right)-2n}}
\end{equation}
and
\begin{equation}
\bm{Q}_t=\sum_{k=1}^K\frac{\widetilde{\bm{R}}_k}{\sqrt{\operatorname{tr}\left(\bm{\widetilde{R}}_k^{-1}\bm{R}_t+\bm{R}_t^{-1}\bm{\widetilde{R}}_k\right)-2n}}.
\end{equation}

\end{prop}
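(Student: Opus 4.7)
The plan is to mirror the strategy of the two preceding propositions: write down the objective $F(\bm{R})=\sum_{k=1}^K d_S(\bm{R},\bm{\widetilde{R}}_k)$, compute its Frobenius gradient via definition \eqref{def:gra}, set $\nabla F(\bm{R})=\bm{0}$, and rearrange into a fixed-point identity. Writing $g_k(\bm{R})=\operatorname{tr}(\bm{\widetilde{R}}_k^{-1}\bm{R}+\bm{R}^{-1}\bm{\widetilde{R}}_k-2\bm{I})$, so that $d_S(\bm{R},\bm{\widetilde{R}}_k)=\sqrt{g_k(\bm{R})}$, I would first compute the directional derivative of $g_k$ at $\bm{R}$ in direction $\bm{X}\in\mathscr{H}(n,\mathbb{C})$, using the standard identity $\frac{d}{d\varepsilon}\big|_{\varepsilon=0}(\bm{R}+\varepsilon\bm{X})^{-1}=-\bm{R}^{-1}\bm{X}\bm{R}^{-1}$ together with the cyclic property of the trace. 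This yields $\langle \nabla g_k(\bm{R}),\bm{X}\rangle=\operatorname{tr}\bigl[(\bm{\widetilde{R}}_k^{-1}-\bm{R}^{-1}\bm{\widetilde{R}}_k\bm{R}^{-1})\bm{X}\bigr]$, hence $\nabla g_k(\bm{R})=\bm{\widetilde{R}}_k^{-1}-\bm{R}^{-1}\bm{\widetilde{R}}_k\bm{R}^{-1}$, and by the chain rule $\nabla F(\bm{R})=\tfrac{1}{2}\sum_k g_k(\bm{R})^{-1/2}\nabla g_k(\bm{R})$.

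Second, I would set $\nabla F(\bm{R})=\bm{0}$ and split the two terms to obtain the matrix identity
\begin{equation*}
\sum_{k=1}^K\frac{\bm{\widetilde{R}}_k^{-1}}{\sqrt{g_k(\bm{R})}}=\bm{R}^{-1}\left(\sum_{k=1}^K\frac{\bm{\widetilde{R}}_k}{\sqrt{g_k(\bm{R})}}\right)\bm{R}^{-1}.
\end{equation*}
Denoting the left- and right-hand coefficient sums by $\bm{P}$ and $\bm{Q}$ respectively, this is precisely the algebraic Riccati-type equation $\bm{R}\bm{P}\bm{R}=\bm{Q}$ with $\bm{P},\bm{Q}\succ\bm{0}$. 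Freezing $\bm{P}$ and $\bm{Q}$ at the current iterate $\bm{R}_t$ supplies the definitions of $\bm{P}_t$ and $\bm{Q}_t$ in the statement.

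Third, I would solve the stationary equation $\bm{R}\bm{P}_t\bm{R}=\bm{Q}_t$ for $\bm{R}\succ\bm{0}$ using the classical symmetrization trick: conjugate both sides by $\bm{P}_t^{1/2}$ to get $(\bm{P}_t^{1/2}\bm{R}\bm{P}_t^{1/2})^2=\bm{P}_t^{1/2}\bm{Q}_t\bm{P}_t^{1/2}$, then take the unique HPD square root and peel off the $\bm{P}_t^{1/2}$ factors, yielding
\begin{equation*}
\bm{R}_{t+1}=\bm{P}_t^{-1/2}\bigl(\bm{P}_t^{1/2}\bm{Q}_t\bm{P}_t^{1/2}\bigr)^{1/2}\bm{P}_t^{-1/2},
\end{equation*}
which is the desired recursion.

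The main obstacle I anticipate is the third step: unlike the LEM and JBLD cases, where $\bm{R}$ appears either only through $\operatorname{Log}\bm{R}$ or linearly inside an inverse, here $\bm{R}$ enters both to the first power (through $\bm{\widetilde{R}}_k^{-1}\bm{R}$) and through its inverse conjugation ($\bm{R}^{-1}\bm{\widetilde{R}}_k\bm{R}^{-1}$), so the stationary condition is genuinely quadratic in $\bm{R}$. The conjugation-and-square-root manoeuvre is what selects the unique HPD solution and must be justified carefully; uniqueness of the HPD square root of $\bm{P}_t^{1/2}\bm{Q}_t\bm{P}_t^{1/2}\succ\bm{0}$ is what makes the iteration well defined. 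The gradient computation and the cyclic-trace manipulations are routine by comparison.
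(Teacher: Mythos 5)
Your proposal is correct and follows essentially the same route as the paper's own proof: the same gradient $\nabla F(\bm{R})=\frac{1}{2}\sum_k g_k(\bm{R})^{-1/2}\bigl(\bm{\widetilde{R}}_k^{-1}-\bm{R}^{-1}\bm{\widetilde{R}}_k\bm{R}^{-1}\bigr)$, the same Riccati-type stationary equation $\bm{R}\bm{P}\bm{R}=\bm{Q}$ with $\bm{P},\bm{Q}$ frozen at $\bm{R}_t$, and the same conjugation-by-$\bm{P}_t^{1/2}$ and unique-HPD-square-root step. Your closing remark about justifying the square-root manoeuvre is a sensible point of care, but it is exactly the argument the paper uses.
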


\begin{proof}
For the  function
\begin{equation}
F(\bm{R}) = \sum_{k=1}^K \sqrt{\operatorname{tr}\left( \bm{\widetilde{R}}_k^{-1}\bm{R} + \bm{R}^{-1}\bm{\widetilde{R}}_k\right) - 2n},
\end{equation}
direct computation using definition \eqref{def:gra} gives
\begin{equation}
\nabla F(\bm{R})=\frac12\sum_{k=1}^K\frac{\widetilde{\bm{R}}^{-1}_k-\bm{R}^{-1}\widetilde{\bm{R}}_k\bm{R}^{-1}}{\sqrt{\operatorname{tr}\left(\bm{\widetilde{R}}_k^{-1}\bm{R}+\bm{R}^{-1}\bm{\widetilde{R}}_k\right)-2n}}.
\end{equation}
%
The equation $\nabla F(\bm{R})=\bm{0}$ can be rewritten as follows
\begin{equation}\label{eq:Ric}
\bm{R}\bm{P}\bm{R}=\bm{Q},
\end{equation}
which looks like a special algebraic Riccati equation but note that the matrices $\bm{P}$ and $\bm{Q}$ also depend on $\bm{R}$, given by
\begin{equation}
\bm{P}=\sum_{k=1}^K\frac{\widetilde{\bm{R}}^{-1}_k}{\sqrt{\operatorname{tr}\left(\bm{\widetilde{R}}_k^{-1}\bm{R}+\bm{R}^{-1}\bm{\widetilde{R}}_k\right)-2n}}
\end{equation}
and
\begin{equation}
\bm{Q}=\sum_{k=1}^K\frac{\widetilde{\bm{R}}_k}{\sqrt{\operatorname{tr}\left(\bm{\widetilde{R}}_k^{-1}\bm{R}+\bm{R}^{-1}\bm{\widetilde{R}}_k\right)-2n}}.
\end{equation}
Since both $\bm{P}$ and $\bm{Q}$ are HPD matrices, we multiply on both sides of \eqref{eq:Ric} by $\bm{P}^{1/2}$ from both the left and the right, amounting to
\begin{equation}
\bm{P}^{1/2}\bm{R}\bm{P}\bm{R}\bm{P}^{1/2}=\bm{P}^{1/2}\bm{Q}\bm{P}^{1/2}.
\end{equation}
The matrix $\bm{R}$ can hence be solved in terms of $\bm{P}$ and $\bm{Q}$ as follows
\begin{equation}
\bm{R}=\bm{P}^{-1/2}\left(\bm{P}^{1/2}\bm{Q}\bm{P}^{1/2}\right)^{1/2}\bm{P}^{-1/2}.
\end{equation}
The fixed-point iteration is proposed according to the final equation.

\end{proof}

\subsection{Anisotropy index}

The discriminative power of weighted HPD matrices can be shown through the corresponding anisotropy indices. The anisotropy index associated with a measure, e.g., a metric or a divergence, at any point of a  matrix manifold reflects the local geometric structure around this point.

\begin{defn}
The anisotropy index related to a geometric measure at  a point $\bm{R}\in \mathscr{P}(n,\mathbb{C})$ is defined by
\begin{equation}
a(\bm{R}) := \underset{\varepsilon > 0}{\operatorname{min}} \, d^2(\bm{R},\varepsilon\bm{I}).
\label{eq:AI}
\end{equation}
\end{defn}

Namely,  the anisotropy index $a$ is the minimum square of a geometric distance between $\bm{R}$ and $\varepsilon\bm{I}$ ($\varepsilon >0$). This distance represents the projection distance from matrix $\bm{R}$ to the subspace $ \left\{ \varepsilon\bm{I} \mid \varepsilon \in \mathbb{R}^+ \right\}$. Bigger the distance, lagger the anisotropy index. Next, we will show the anisotropy indices for the LEM, JBLD divergence and SKLD, respectively. They can be simply calculated by considering minimum of the function
\begin{equation}
f(\varepsilon)=d^2(\bm{R},\varepsilon\bm{I}), \quad \forall\bm{R}\in\mathscr{P}(n,\mathbb{C}).
\end{equation}

\begin{prop}
The anisotropy index related to the LEM metric at $\bm{R}\in \mathscr{P}(n,\mathbb{C})$ is given by
\begin{equation}
a_L(\bm{R}) = \norm{ \operatorname{Log}\bm{R} - \operatorname{Log}(\varepsilon^*\bm{I}) }^2,
\end{equation}
where  $\varepsilon^* = \sqrt[n]{\operatorname{det}(\bm{R})}$.
\end{prop}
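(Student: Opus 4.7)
The plan is to write the squared LEM distance as a function of the single positive scalar $\varepsilon$, minimize it explicitly, and show the minimizer is $\varepsilon^* = \sqrt[n]{\det \bm{R}}$.

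First I would reduce the problem to scalar calculus. Since $\varepsilon\bm{I}$ is diagonal with positive entries, its principal logarithm is $\operatorname{Log}(\varepsilon \bm{I}) = (\ln\varepsilon)\bm{I}$. Setting $\bm{L} := \operatorname{Log}\bm{R}$ (which is Hermitian), the definition of the LEM distance gives
\begin{equation*}
f(\varepsilon) = d_L^2(\bm{R},\varepsilon\bm{I}) = \norm{\bm{L} - (\ln\varepsilon)\bm{I}}^2.
\end{equation*}
Expanding via the Frobenius inner product $\norm{\bm{A}}^2 = \operatorname{tr}(\bm{A}^{\operatorname{H}}\bm{A})$ and using Hermiticity of $\bm{L}$, this becomes
\begin{equation*}
f(\varepsilon) = \operatorname{tr}(\bm{L}^2) - 2(\ln\varepsilon)\operatorname{tr}(\bm{L}) + n(\ln\varepsilon)^2.
\end{equation*}

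Next I would minimize in the variable $u := \ln\varepsilon \in \mathbb{R}$, since the bijection $\varepsilon\mapsto\ln\varepsilon$ maps $(0,\infty)$ onto $\mathbb{R}$. The function $u \mapsto nu^2 - 2u\operatorname{tr}(\bm{L}) + \operatorname{tr}(\bm{L}^2)$ is a strictly convex quadratic, so its unique minimizer is obtained from $\frac{\operatorname{d}f}{\operatorname{d}u} = 2nu - 2\operatorname{tr}(\bm{L}) = 0$, giving $u^* = \operatorname{tr}(\bm{L})/n$, equivalently
\begin{equation*}
\ln \varepsilon^* = \frac{\operatorname{tr}(\operatorname{Log}\bm{R})}{n}.
\end{equation*}

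Finally I would convert this into the claimed closed form using the identity $\operatorname{tr}(\operatorname{Log}\bm{R}) = \ln\operatorname{det}(\bm{R})$, which follows from diagonalizing the HPD matrix $\bm{R}$ and applying $\operatorname{Log}$ entrywise on the eigenvalues (so the trace becomes $\sum_i\ln\lambda_i = \ln\prod_i\lambda_i$). This yields $\varepsilon^* = \exp\!\left(\tfrac{1}{n}\ln\det\bm{R}\right) = \sqrt[n]{\operatorname{det}(\bm{R})} > 0$, confirming admissibility, and substituting $\varepsilon^*$ back into $f$ gives the stated formula for $a_L(\bm{R})$. There is no real obstacle here; the only subtle point worth stating carefully is the reparametrization $u = \ln\varepsilon$, which linearizes the logarithm of the diagonal scaling and turns the minimization into a one-variable quadratic.
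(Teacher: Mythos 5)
Your proof is correct, and it takes a genuinely different (and more elementary) route than the paper. The paper minimizes $f(\varepsilon)=\norm{\operatorname{Log}\bm{R}-\operatorname{Log}(\varepsilon\bm{I})}^2$ by differentiating directly in $\varepsilon$, invoking the matrix-logarithm differentiation machinery of Lemma \ref{lem:aa} to obtain $f'(\varepsilon)=-\tfrac{2}{\varepsilon}\bigl(\ln\operatorname{det}(\bm{R})-\ln\varepsilon^n\bigr)$, and then reads off the stationary point $\varepsilon^*=\sqrt[n]{\operatorname{det}(\bm{R})}$; this has the virtue of following the same stationary-condition template the paper applies uniformly to the JBLD and SKLD anisotropy indices. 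You instead observe that $\operatorname{Log}(\varepsilon\bm{I})=(\ln\varepsilon)\bm{I}$ and expand the Frobenius norm, which after the substitution $u=\ln\varepsilon$ turns the whole problem into the strictly convex scalar quadratic $nu^2-2u\operatorname{tr}(\operatorname{Log}\bm{R})+\operatorname{tr}\bigl((\operatorname{Log}\bm{R})^2\bigr)$; no matrix differentiation is needed at all, and you finish with the standard identity $\operatorname{tr}(\operatorname{Log}\bm{R})=\ln\operatorname{det}(\bm{R})$. What your route buys is transparency on a point the paper leaves implicit: strict convexity in $u$ shows the critical point is the unique global minimizer over all $\varepsilon>0$, whereas the paper only exhibits a unique positive stationary point without comment on second-order behavior. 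What the paper's route buys is methodological uniformity, since the LEM case there is just the simplest instance of the derivative computation reused for the other two measures. Both arguments are complete and arrive at the same $\varepsilon^*$ and the same $a_L(\bm{R})$.
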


\begin{proof}
Now the function $f(\varepsilon)$ reads
\begin{equation}
f(\varepsilon) = \norm{ \operatorname{Log}\bm{R} - \operatorname{Log}(\varepsilon\bm{I}) }^2,
\end{equation}
whose derivative can immediately be computed:
\begin{equation}
\begin{aligned}
 f'(\varepsilon)  &= -\frac{2}{\varepsilon}\operatorname{tr}\left(\operatorname{Log}\bm{R} - \operatorname{Log}(\varepsilon\bm{I})\right)\\
&=-\frac{2}{\varepsilon}\left( \operatorname{ln}\operatorname{det}(\bm{R}) - \operatorname{ln}\varepsilon^n\right).
\end{aligned}
\end{equation}
Note that Lemma \ref{lem:aa} is used here. Stationary condition that  $f'(\varepsilon)=0$ gives the unique positive solution, denoted by $\varepsilon^*$:
\begin{equation}
\varepsilon^* = \sqrt[n]{\operatorname{det}(\bm{R})}.
\end{equation}
\end{proof}

\begin{prop}\label{prop:aiJ}
The anisotropy index associated with the JBLD divergence at  a point $\bm{R}\in \mathscr{P}(n,\mathbb{C})$ is given by
\begin{equation}
a_J(\bm{R})  = \operatorname{ln}\operatorname{det}\bigg(\frac{\bm{R} + \varepsilon^*\bm{I}}{2}\bigg) - \frac{1}{2}\operatorname{ln}\operatorname{det}(\bm{R})\operatorname{det}(\varepsilon^*\bm{I}),
\end{equation}
where $\varepsilon^*$ is the unique positive solution of the equation
\begin{equation}
2\sum_{i=1}^n \frac{1}{\lambda_i + \varepsilon} = \frac{n}{\varepsilon}
\end{equation}
for all $n\geq 1$, where $\lambda_1,\lambda_2,\ldots,\lambda_n$ are the eigenvalues of $\bm{R}$.
\end{prop}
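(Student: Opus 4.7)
The plan is to follow the same recipe as for the LEM anisotropy index: specialise the distance-squared function $f(\varepsilon)=d_J^2(\bm{R},\varepsilon\bm{I})$, compute its derivative, solve the stationary condition, and verify that the stationary point is the unique positive minimizer.

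First, using the definition of the JBLD divergence I would write
\begin{equation*}
f(\varepsilon)=\ln\det\!\left(\frac{\bm{R}+\varepsilon\bm{I}}{2}\right)-\tfrac{1}{2}\ln\det(\bm{R})-\tfrac{n}{2}\ln\varepsilon.
\end{equation*}
Since $\bm{R}$ and $\varepsilon\bm{I}$ commute, I can diagonalise $\bm{R}$ and express the first term through the eigenvalues $\lambda_1,\ldots,\lambda_n$ of $\bm{R}$, obtaining $\ln\det\!\left(\frac{\bm{R}+\varepsilon\bm{I}}{2}\right)=\sum_{i=1}^n\ln(\lambda_i+\varepsilon)-n\ln 2$. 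Differentiating termwise then yields
\begin{equation*}
f'(\varepsilon)=\sum_{i=1}^n\frac{1}{\lambda_i+\varepsilon}-\frac{n}{2\varepsilon},
\end{equation*}
and the stationary condition $f'(\varepsilon)=0$ is exactly the equation $2\sum_{i=1}^n\frac{1}{\lambda_i+\varepsilon}=\frac{n}{\varepsilon}$ stated in the proposition.

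Next I would verify that this equation admits a unique positive solution $\varepsilon^*$. The cleanest route is to rewrite the equation as $g(\varepsilon)=0$, where
\begin{equation*}
g(\varepsilon):=n-2\sum_{i=1}^n\frac{\lambda_i}{\lambda_i+\varepsilon}.
\end{equation*}
Each summand $\lambda_i/(\lambda_i+\varepsilon)$ is strictly decreasing in $\varepsilon>0$, so $g$ is strictly increasing. The boundary behaviour $g(0^+)=-n<0$ and $g(\varepsilon)\to n>0$ as $\varepsilon\to\infty$ then yields existence and uniqueness of a positive root $\varepsilon^*$ by the intermediate value theorem, while strict monotonicity of $g$ (equivalently the second-order behaviour of $f$ on either side of $\varepsilon^*$) ensures that $\varepsilon^*$ is the global minimizer.

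Finally I would substitute $\varepsilon^*$ back into $f$ and use $\det(\varepsilon^*\bm{I})=(\varepsilon^*)^n$ to recognise the expression $\ln\det\!\left(\frac{\bm{R}+\varepsilon^*\bm{I}}{2}\right)-\tfrac{1}{2}\ln\det(\bm{R})\det(\varepsilon^*\bm{I})$ claimed in the statement, which by definition equals $a_J(\bm{R})$. The only subtle point—and the step I expect to demand the most care—is the uniqueness/monotonicity argument, because $f'$ itself is not obviously monotone (its derivative $-\sum(\lambda_i+\varepsilon)^{-2}+n/(2\varepsilon^2)$ can change sign); rewriting the stationary equation in the $g(\varepsilon)$ form above side-steps this difficulty and is the key technical move.
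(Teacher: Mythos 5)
Your proof is correct, and while the first half (specialising $f(\varepsilon)=d_J^2(\bm{R},\varepsilon\bm{I})$, diagonalising, differentiating to get $f'(\varepsilon)=\sum_i(\lambda_i+\varepsilon)^{-1}-n/(2\varepsilon)$, and reading off the stationary equation) coincides with the paper's proof, your uniqueness argument is genuinely different and notably simpler. The paper devotes a separate lemma (Lemma A.1) to this step: it clears denominators to obtain the degree-$n$ polynomial $g_n(\varepsilon)=\sum_{k=0}^n(n-2k)\varepsilon^{n-k}\sum_{i_1<\cdots<i_k}\lambda_{i_1}\cdots\lambda_{i_k}$, observes the single sign change in its coefficients, and runs an iterated-derivative argument (showing $g_n^{(n/2)}>0$, then descending order by order) to conclude $g_n$ crosses the positive axis exactly once. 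Your move of multiplying the stationary equation by $\varepsilon$ and rewriting it as $g(\varepsilon)=n-2\sum_i\lambda_i/(\lambda_i+\varepsilon)=0$ replaces all of that with a one-line monotonicity observation: each $\lambda_i/(\lambda_i+\varepsilon)$ is strictly decreasing, so $g$ is strictly increasing from $g(0^+)=-n$ to $n$, and the intermediate value theorem gives existence and uniqueness at once. Your reformulation buys one further thing the paper leaves implicit: since $\varepsilon f'(\varepsilon)=\tfrac{1}{2}g(\varepsilon)$, the derivative $f'$ changes sign from negative to positive exactly at $\varepsilon^*$, so $\varepsilon^*$ is verified to be the global minimizer on $(0,\infty)$ rather than merely a stationary point --- the paper's proof solves $f'(\varepsilon)=0$ but never checks minimality. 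The only cost is that your argument is tailored to this rational equation, whereas the paper's polynomial sign analysis is a self-contained root-counting lemma; but for the proposition at hand your route is shorter, more elementary, and strictly more informative.
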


\begin{proof}
See Appendix \ref{appen:A}.
\end{proof}

\begin{prop}
The anisotropy index associated with the SKLD divergence at  a point $\bm{R}\in \mathscr{P}(n,\mathbb{C})$ is
\begin{equation}
a_S(\bm{R})  = \operatorname{tr}\left(\varepsilon^*\bm{R}^{-1}+\frac{1}{\varepsilon^*}\bm{R} - 2\bm{I}\right),
\end{equation}
where
\begin{equation}
\varepsilon^* = \sqrt{\frac{\operatorname{tr}\left(\bm{R}\right)}{\operatorname{tr}\left(\bm{R}^{-1}\right)}}.
\end{equation}
\end{prop}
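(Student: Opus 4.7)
My plan is to mirror the structure used for the LEM anisotropy index: reduce the problem to a scalar optimization by substituting $\bm{Y}=\varepsilon\bm{I}$ into the SKLD formula, differentiate in $\varepsilon$, and solve for the unique positive stationary point.

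First, I would evaluate $f(\varepsilon) := d_S^2(\bm{R},\varepsilon\bm{I})$ using the definition of the SKLD divergence. Since $(\varepsilon\bm{I})^{-1}=\varepsilon^{-1}\bm{I}$ commutes with everything and the trace is linear, this immediately simplifies to
\begin{equation*}
f(\varepsilon) \;=\; \frac{1}{\varepsilon}\operatorname{tr}(\bm{R}) + \varepsilon \operatorname{tr}(\bm{R}^{-1}) - 2n,
\end{equation*}
a scalar function of one positive variable. No matrix calculus is required at this step, unlike the JBLD case treated in Proposition \ref{prop:aiJ}.

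Next I would compute $f'(\varepsilon)=-\varepsilon^{-2}\operatorname{tr}(\bm{R})+\operatorname{tr}(\bm{R}^{-1})$ and impose the stationarity condition $f'(\varepsilon)=0$. Since $\bm{R}\succ\bm{0}$, both $\operatorname{tr}(\bm{R})$ and $\operatorname{tr}(\bm{R}^{-1})$ are strictly positive, so the equation has the unique positive solution
\begin{equation*}
\varepsilon^* \;=\; \sqrt{\tfrac{\operatorname{tr}(\bm{R})}{\operatorname{tr}(\bm{R}^{-1})}}.
\end{equation*}
To confirm this is indeed the minimizer rather than a saddle, I would check $f''(\varepsilon)=2\varepsilon^{-3}\operatorname{tr}(\bm{R})>0$ on $(0,\infty)$, so $f$ is strictly convex and $\varepsilon^*$ achieves the global minimum. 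Alternatively, the AM--GM inequality gives $f(\varepsilon)\geq 2\sqrt{\operatorname{tr}(\bm{R})\operatorname{tr}(\bm{R}^{-1})}-2n$ with equality exactly at $\varepsilon^*$, which both identifies the minimizer and provides a clean sanity check.

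Finally I would substitute $\varepsilon^*$ back into $f$ and rewrite the resulting expression in the trace form stated, namely $a_S(\bm{R})=\operatorname{tr}(\varepsilon^*\bm{R}^{-1}+\tfrac{1}{\varepsilon^*}\bm{R}-2\bm{I})$; this is just linearity of trace applied in reverse. There is no real obstacle here: unlike the JBLD case, the SKLD anisotropy index admits a closed-form minimizer because $\varepsilon\bm{I}$ commutes with $\bm{R}$ and $\bm{R}^{-1}$, so the optimization decouples from the spectrum of $\bm{R}$ and reduces to an elementary one-variable convex minimization.
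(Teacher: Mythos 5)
Your proposal is correct and follows essentially the same route as the paper's proof: reduce to the scalar function $f(\varepsilon)=\varepsilon\operatorname{tr}(\bm{R}^{-1})+\varepsilon^{-1}\operatorname{tr}(\bm{R})-2n$ and solve the stationarity condition $f'(\varepsilon)=0$ for the unique positive root. Your additional verification that $\varepsilon^*$ is a global minimizer (via $f''>0$ or the AM--GM bound) is a small strengthening the paper leaves implicit, but the core argument is identical.
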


\begin{proof}
The proof is similar  by considering  stationary condition of the function
\begin{equation}
f(\varepsilon) = \operatorname{tr}\left(\varepsilon\bm{R}^{-1}+\frac{1}{\varepsilon}\bm{R} - 2\bm{I}\right),
\end{equation}
whose derivative is
\begin{equation}
f'(\varepsilon)=\operatorname{tr}\left(\bm{R}^{-1}-\frac{1}{\varepsilon^2}\bm{R}\right).
\end{equation}
By solving $f'(\varepsilon)=0$, the proof is complete.
%
%
\end{proof}

\section{Numerical results}
\label{sec:nr}

The class of MIG detectors we propose in this paper is based on medians of a set of HPD matrices with a manifold filter,  corresponding to  various  geometric measures. To illustrate the effectiveness of those MIG median detectors, in this section, numerical simulations are provided to analyze them from three aspects:  robustness to outliers;  the discriminative power on the matrix manifold before and after taking the manifold filter, e.g., matrix weighting, into consideration (see Section \ref{sec:mantoman}); and detection performances.

\subsection{Robustness of geometric medians}

In this subsection, we investigate the robustness of geometric medians in terms of the sensitivity to the number of interferences as well as to the number of samples, respectively. The sample dataset is generated according to a multivariate complex Gaussian distribution with zero mean and the following covariance matrix
\begin{equation}
\bm{\Sigma} = \sigma_n^2 \bm{I} + \sigma_c^2\bm{\Sigma}_c,
\end{equation}
where $\sigma_n^2\bm{I}$ and $\sigma_c^2\bm{\Sigma}_c$ denote the noise component with power $\sigma_n^2$ and the clutter component with power $\sigma_c^2$, respectively. The clutter-to-noise ratio (CNR) is defined by CNR$=\sigma_c^2/\sigma_n^2$. Entries of $\bm{\Sigma}_c$ are given by
\begin{equation}
\bm{\Sigma}_c(j,k) = \rho^{|j-k|}\exp\left(\operatorname{i}2\pi f_c(j-k)\right),\quad  j,k = 1,2,\ldots, n,
\label{eq:Known_Cov}
\end{equation}
where $\rho$ is the one-lag correlation coefficient, and $f_c$ denotes the clutter normalized Doppler frequency. In the simulations, the parameter are fixed with $\rho=0.95$, $\sigma_n^2=1$, $\textrm{CNR}=20$ dB, $f_c = 0.1$. A set of $40$ sample data is generated without interferences, and then, a number of $1,2,\ldots,15$ interferences are respectively injected into the sample data. The normalized Doppler frequency of all interferences is set to be $0.2$. The SCNR is defined as
\begin{equation}
\textrm{SCNR} = |\alpha|^2\bm{p}\bm{\Sigma}^{-1}\bm{p},
\end{equation}
where $\alpha$ denotes the amplitude coefficient and $\bm{p}$ is the steering vector. Here, SCNR is set to be $15$ dB. An offset error $L_{error}$ that denotes the difference between the median $\bm{R}_0$ of the sample data without interferences and the median $\bm{R}_{interf}$ of the sample data with interferences is defined by
\begin{equation}
L_{error} = \frac{\norm{ \bm{R}_0 -  \bm{R}_{interf} }}{\norm{ \bm{R}_0 }}.
\end{equation}

Fig. \ref{Fig:Robustness_of_Num_of_Outlier} shows  offset errors of the SCM and geometric medians under different number of interferences.  It is clear that geometric medians are more robust than the SMC, and among them, the JBLD median is the most robust against interferences.

\begin{figure}[H]
  \centering
  \includegraphics[width=8.5cm]{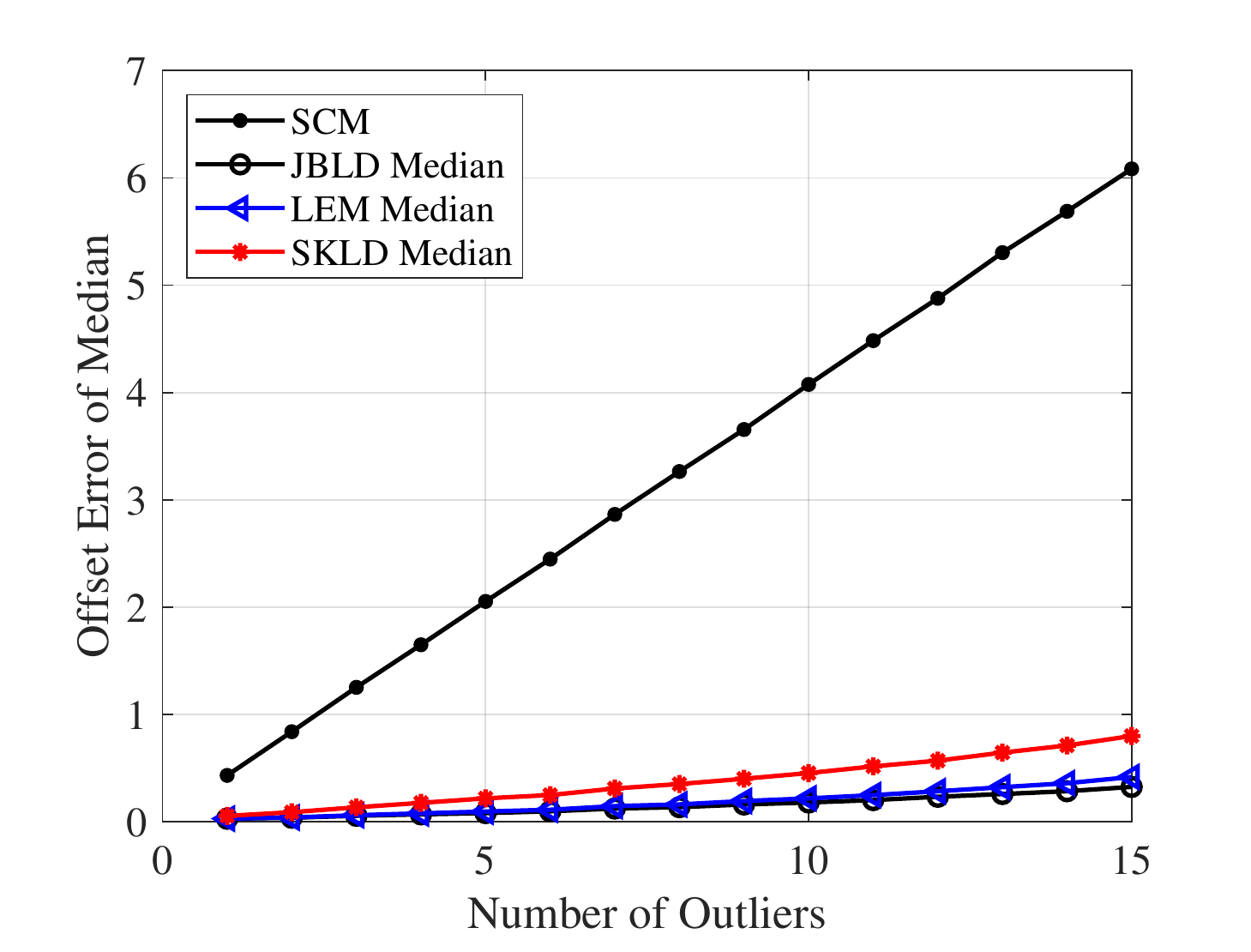}\\
  \caption{Offset errors under different number of interferences.}
  \label{Fig:Robustness_of_Num_of_Outlier}
\end{figure}

\begin{figure}[!t]
  \centering
  \includegraphics[width=8cm]{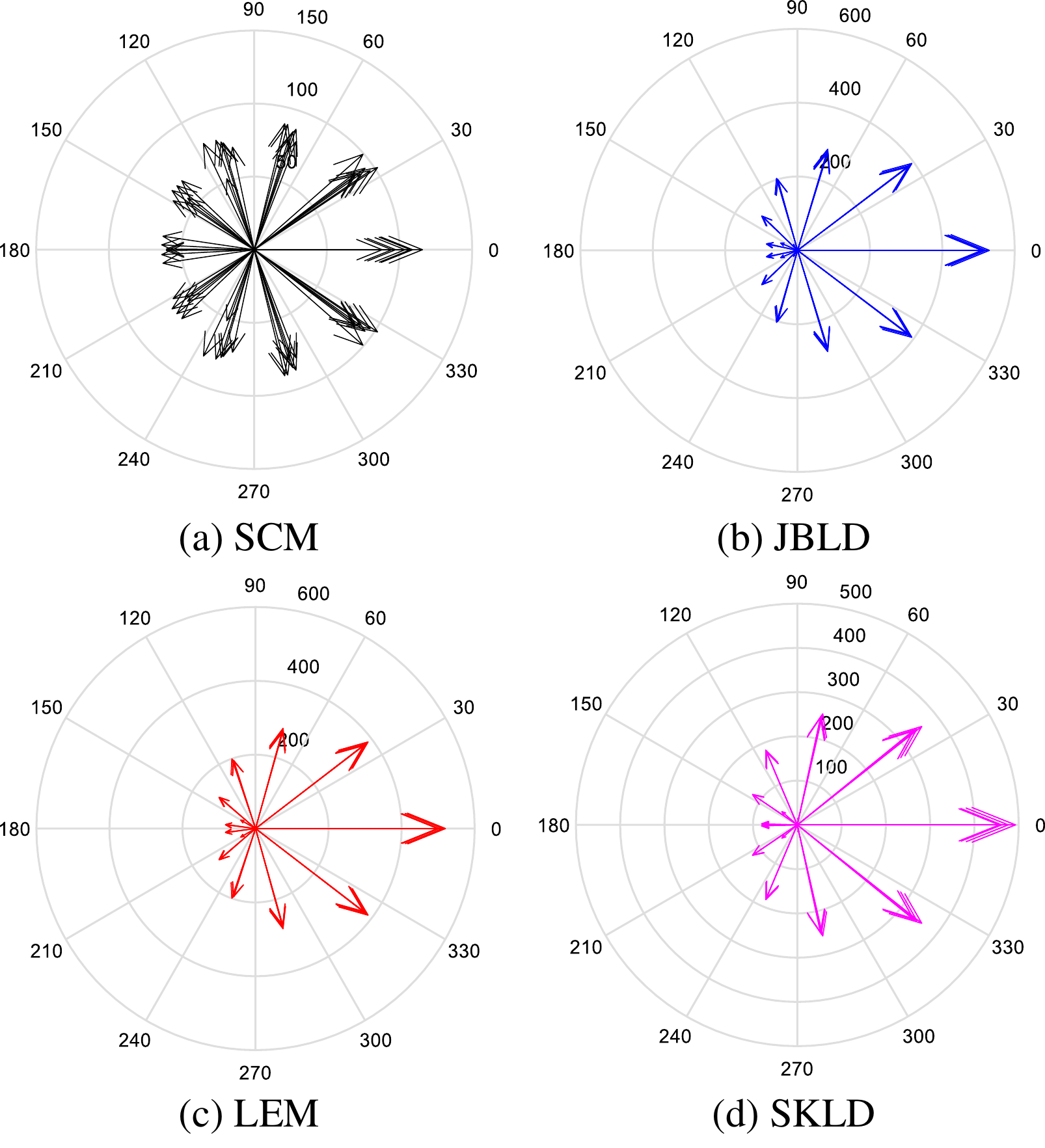}\\
  \caption{Energy distributions without any interference.}
  \label{Fig:Without_Outlier}
\end{figure}

\begin{figure}[!t]
  \centering
  \includegraphics[width=8cm]{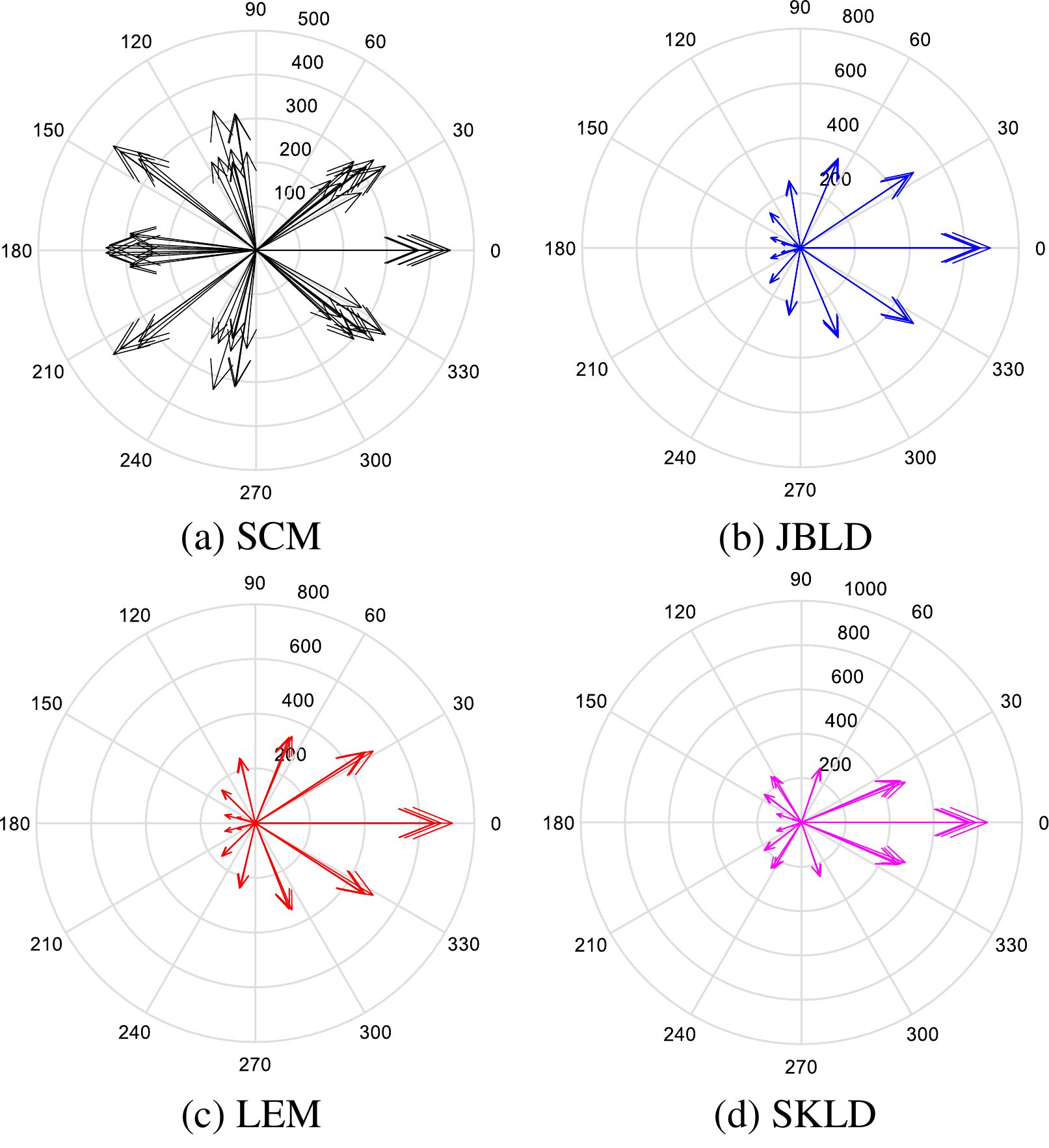}\\
  \caption{Energy distributions with $10$ interferences.}
  \label{Fig:Outlier_10}
\end{figure}

We further analyze their robustness through energy distribution analysis based on either of the following criteria: 1) a method is more robust if smaller change about its energy distribution occurs w.r.t. the same number of interferences;  2) a method is more robust if its energy distribution is more compact, namely with smaller covariance. Fig. \ref{Fig:Without_Outlier} and Fig. \ref{Fig:Outlier_10} show the energy distributions without and with $10$ interferences, respectively. From criterion 1), the JBLD median and the LEM median are  the most robust, while the SKL median is less robust but more robust compared with the SCM. From criterion 2), it is obvious that the geometric medians are more compact and hence more robust than the SCM. In particular, the JBLD median and the LEM median are again the most robust. In summary, energy distribution analysis shows that the geometric medians are more robust than the SCM and the JBLD and LEM medians are the most robust.

\begin{figure}[!t]
  \centering
  \includegraphics[width=8.5cm]{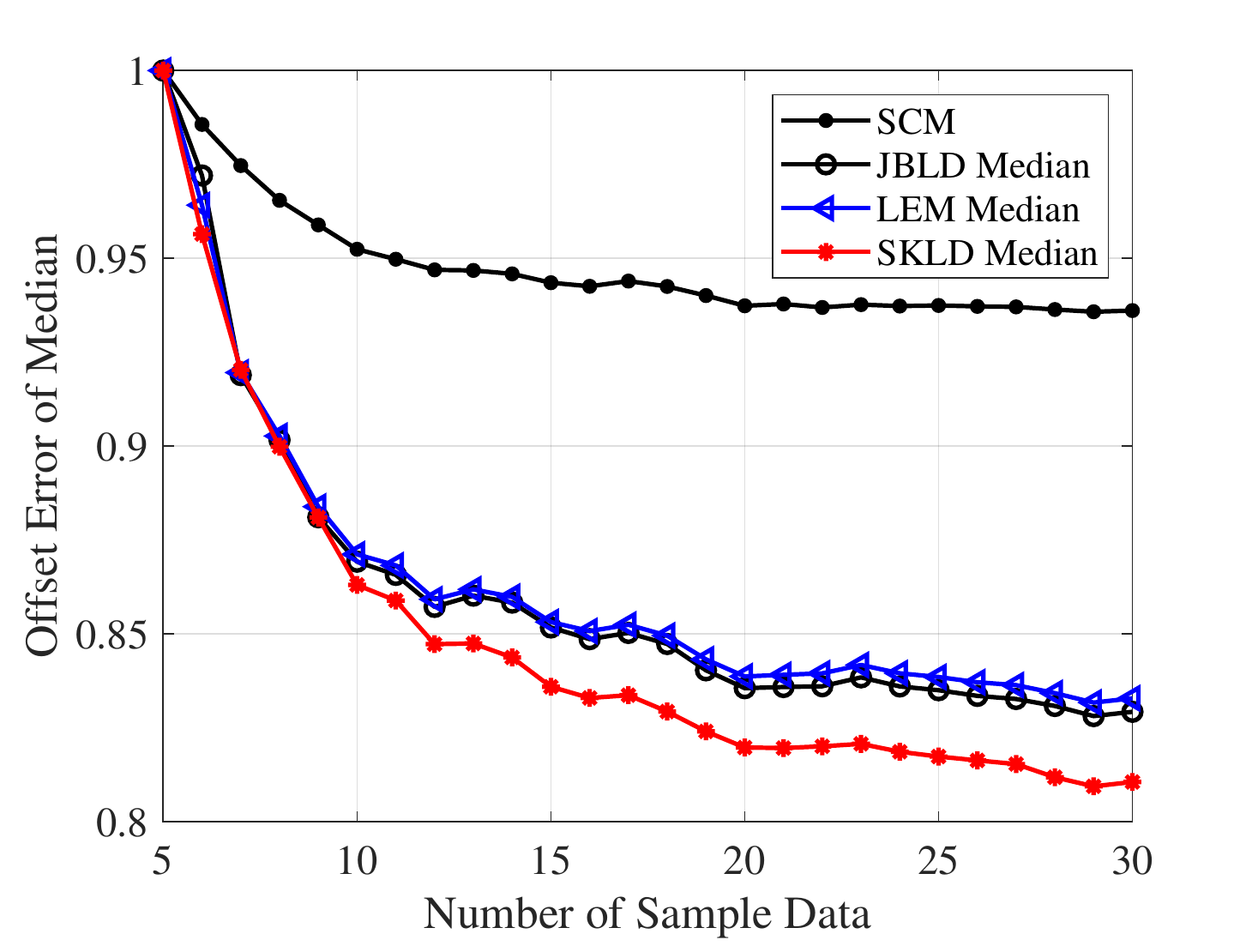}\\
  \caption{Offset errors of the SCM and geometric medians under different number of sample data.}
  \label{Fig:Robustness_of_Num_of_Sample}
\end{figure}

\begin{figure}[!t]
  \centering
  \includegraphics[width=8cm]{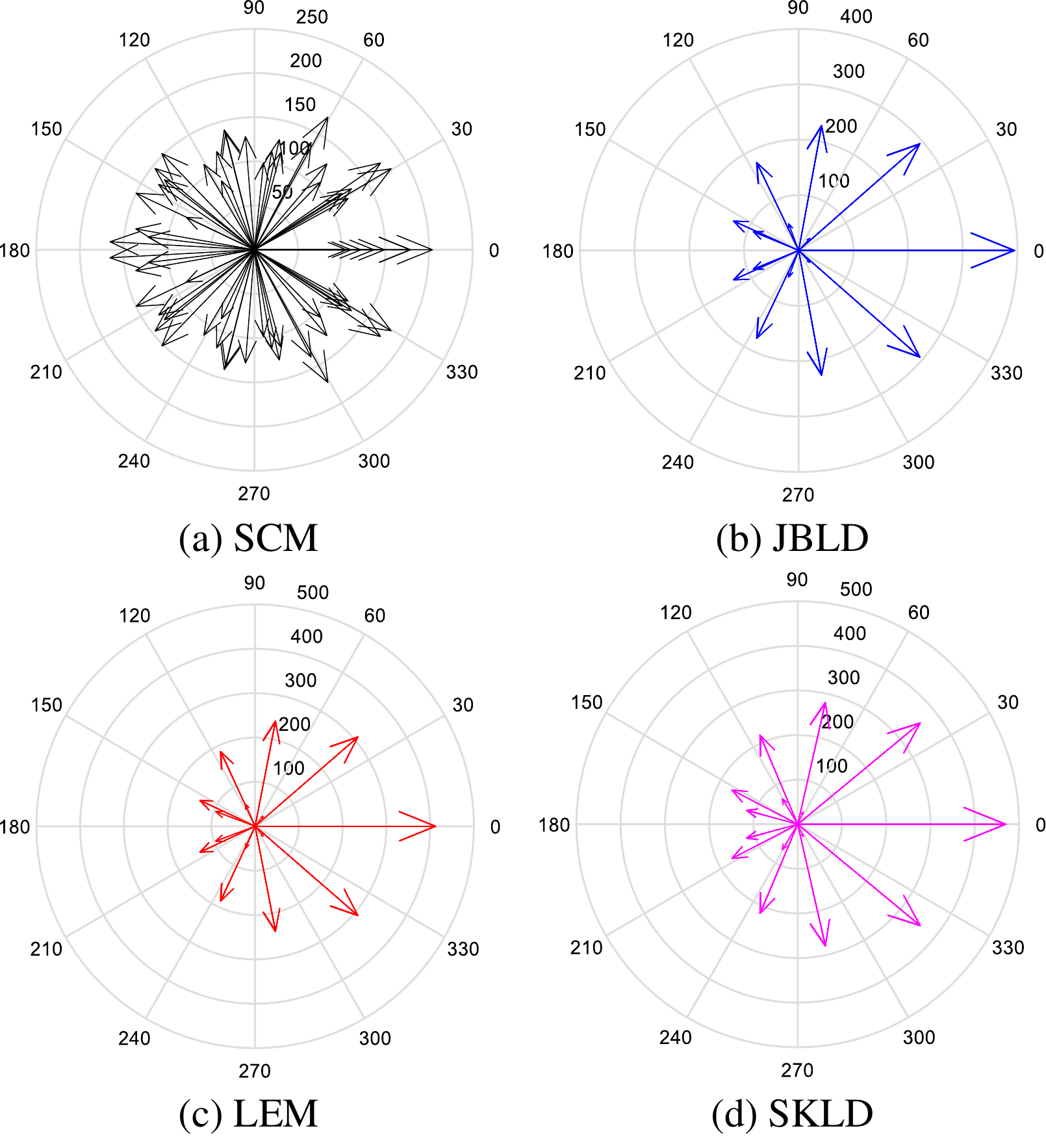}\\
  \caption{Energy distributions with $8$ sample data.}
  \label{Fig:Sample_8}
\end{figure}

Next, we are going to  show their robustness against the number of sample data. The sample dataset is generated  according to a multivariate complex Gaussian distribution with zero mean and the covariance matrix.  Denote the covariance matrix of those samples by $\bm{\widehat{R}}$, and define the error against the number of sample data by
\begin{equation}
T_{error} = \frac{\norm{ \bm{\widehat{R}} - \bm{\Sigma} }}{\norm{ \bm{\Sigma} }}.
\end{equation}

It is shown in Fig. \ref{Fig:Robustness_of_Num_of_Sample} that the SKL median is the most robust and geometric medians are much more robust than the SCM.

In Fig. \ref{Fig:Sample_8}, we show the energy distributions with $8$ sample data. It is clear that the geometric medians are more compact and hence more robust compared with the SCM.

\subsection{The discriminative power}

In this subsection, we analyze the discriminative power by comparing the statistics with and without manifold filter. First of all, we compare the normalized detection statistics before and after weighting the matrices; see Eq. \eqref{eq:manitomani} in Section \ref{sec:mantoman}. A number of $40$ sample data is simulated and an interference is injected into the $20$-th sample with $\textrm{SCNR} = 15$ {dB} and $f_d = 0.2$. Parameters $m$ and $h$ are chosen to be $m=11,h=1.5$, $m=11,h=2$ and $m=13,h=1.5$, respectively. The detection statistic, namely the distance between the CMM and the HPD matrix in the CUT, is computed with and without manifold filter. The normalized detection statistics are plotted in Fig. \ref{Fig:Statistics}. Obviously, the detection statistic after manifold filter is smaller than that before filter in the cell without target signal. In other words, after manifold filter, the distance between clutter and clutter becomes smaller while the distance between target and clutter becomes larger.

Anisotropy analysis is also used to illustrate the discriminative power. To analyze the difference between the target signal and the CCM in terms of the anisotropy on the matrix manifold, we define an anisotropy index of discrimination (AD) to measure the difference on the matrix manifold as
\begin{equation}
\textrm{AD} = \frac{\textrm{AI}_{signal}}{\textrm{AI}_{CCM}}.
\end{equation}
Here, $\textrm{AI}_{signal}$ and $\textrm{AI}_{CCM}$ denote the target and clutter anisotropies, respectively reflecting the local geometric structures at locations of the target and the CCM on the matrix manifold. $\textrm{AD}$ represents the difference in the anisotropy for the target and the CCM. The larger the value of $\textrm{AD}$, the greater the difference between the two points. Fig. \ref{Fig:Detection_Potential} shows the $\textrm{AD}$ with $100$ Monte Carlo trials. It can be seen that after manifold filter, the values of $\textrm{AD}$ become larger, implying that the difference in local geometric structures between target signal and clutter becomes larger after manifold filter.

\begin{figure}[!t]
  \centering
  \subfigure[JBLD divergence]{\includegraphics[width=7.5cm]{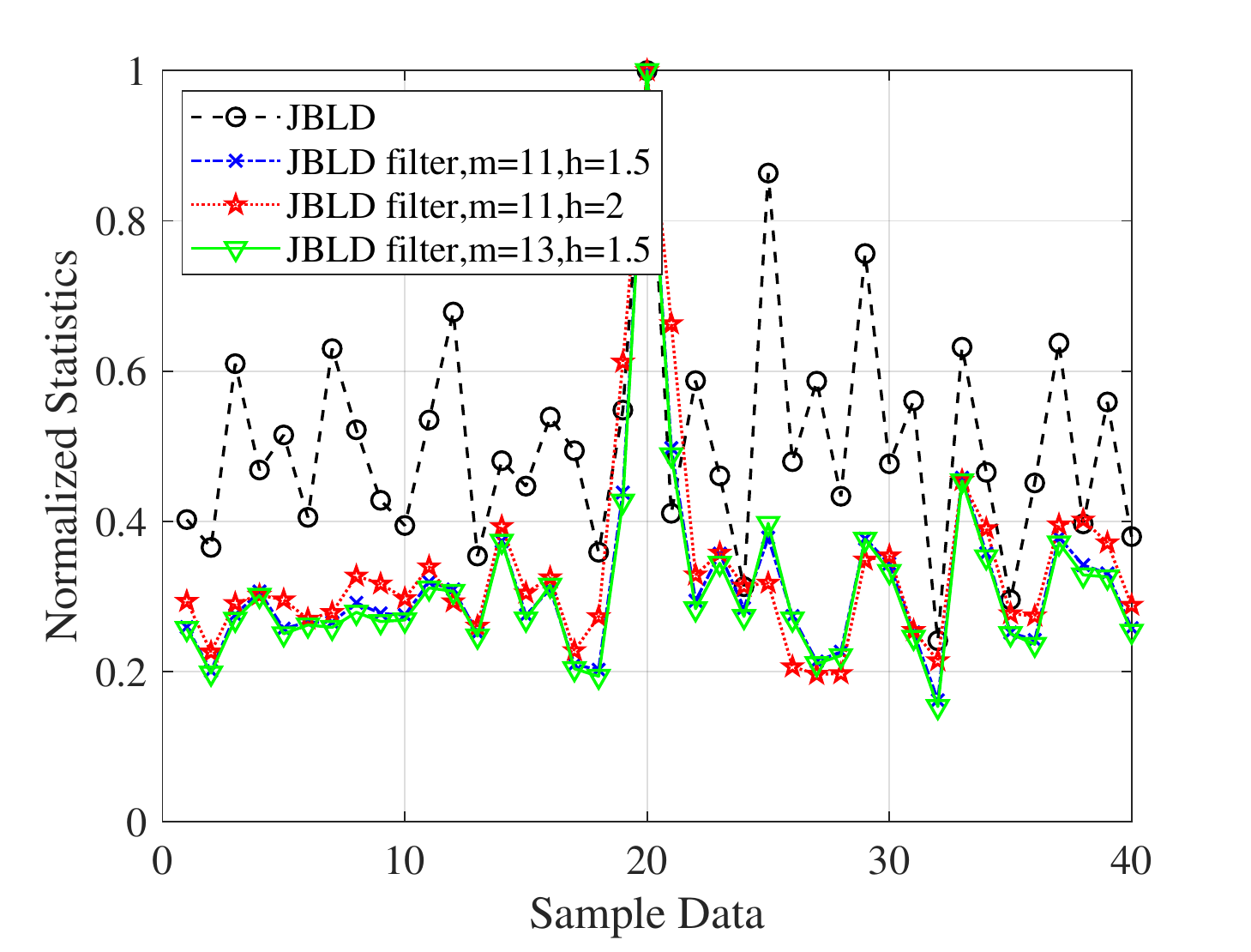}}\\
  \subfigure[LEM]{\includegraphics[width=7.5cm]{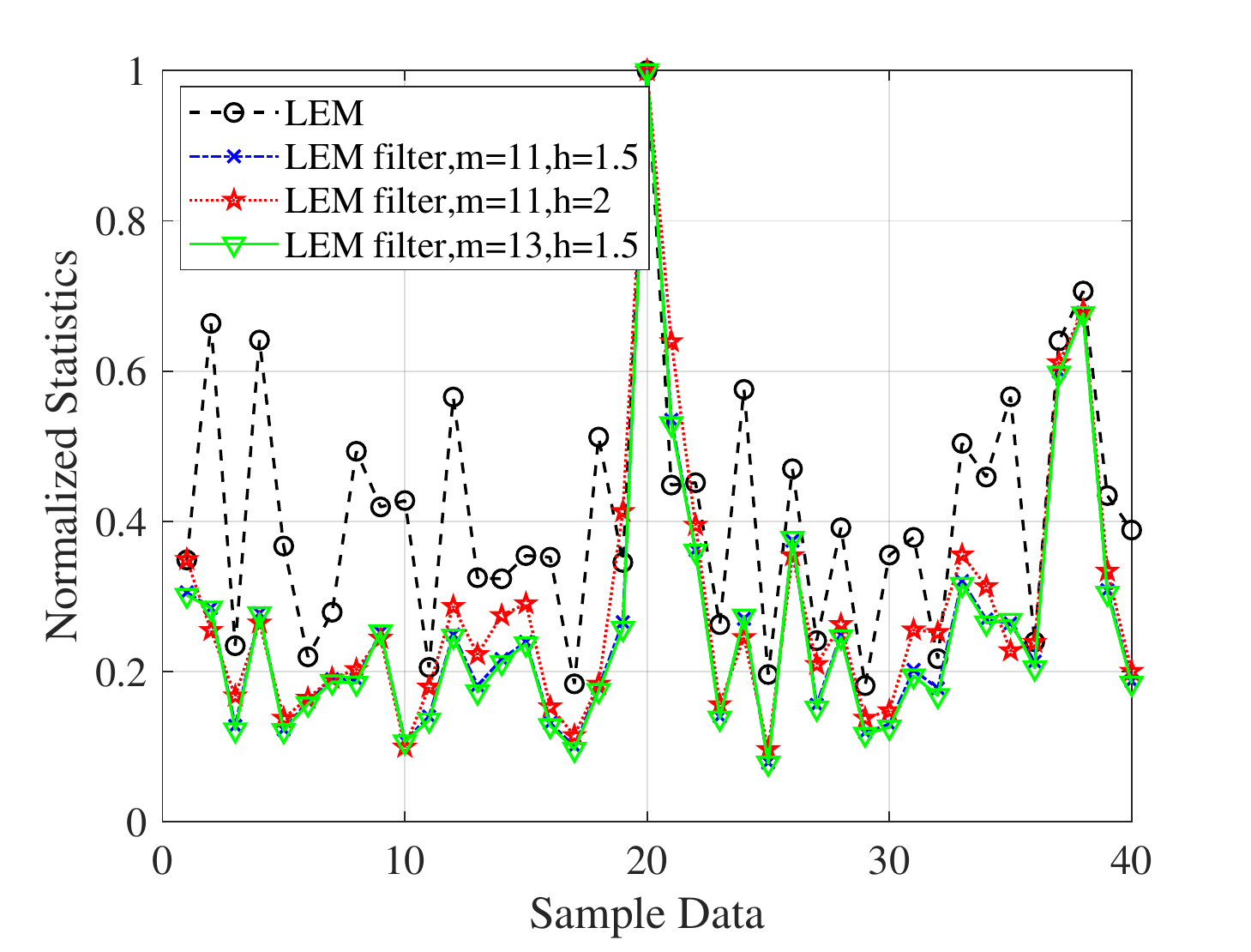}}\\
  \subfigure[SKLD]{\includegraphics[width=7.5cm]{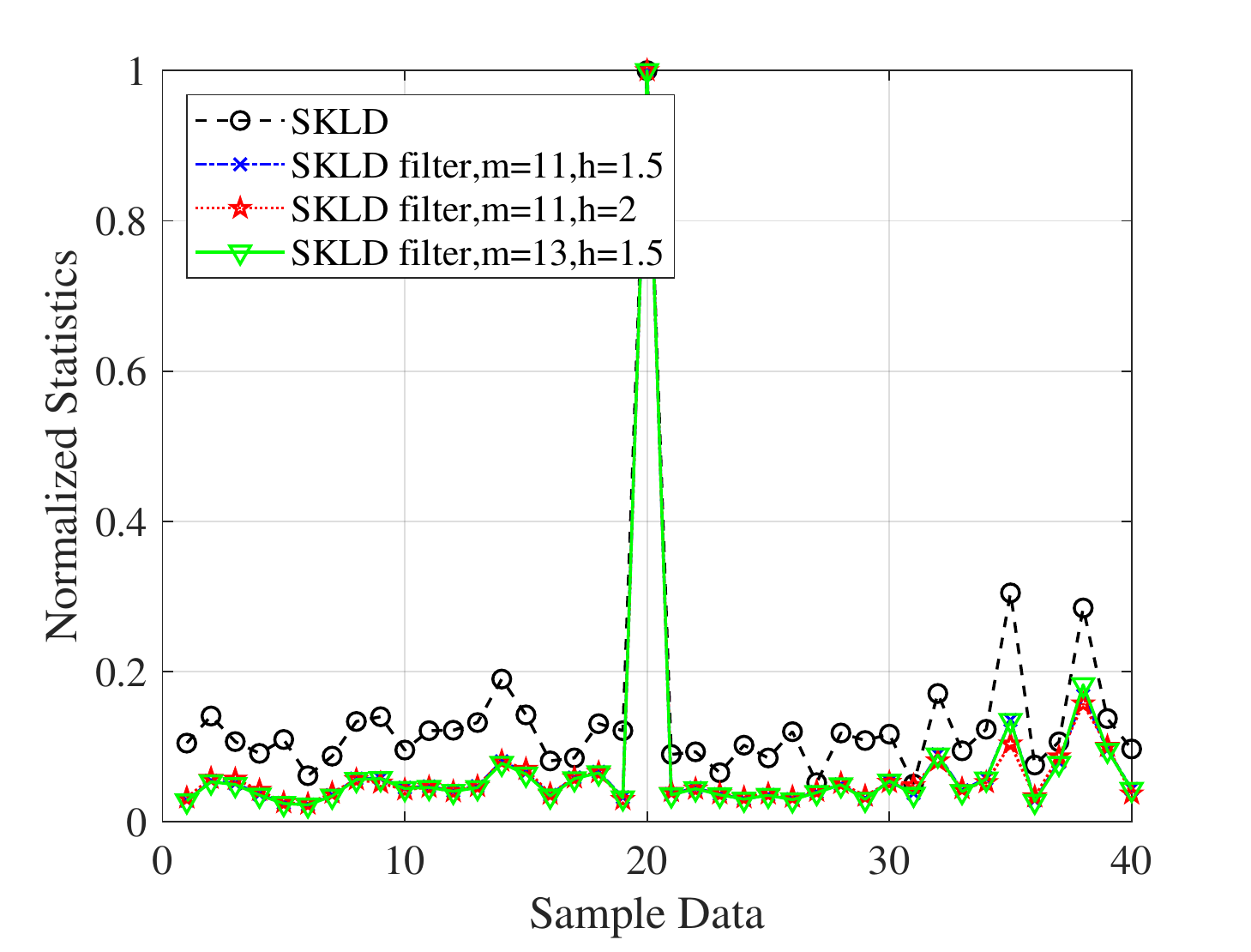}}\\
  \caption{Normalized statistics with and without manifold filter}
  \label{Fig:Statistics}
\end{figure}

\begin{figure}[!t]
  \centering
  \includegraphics[width=9cm]{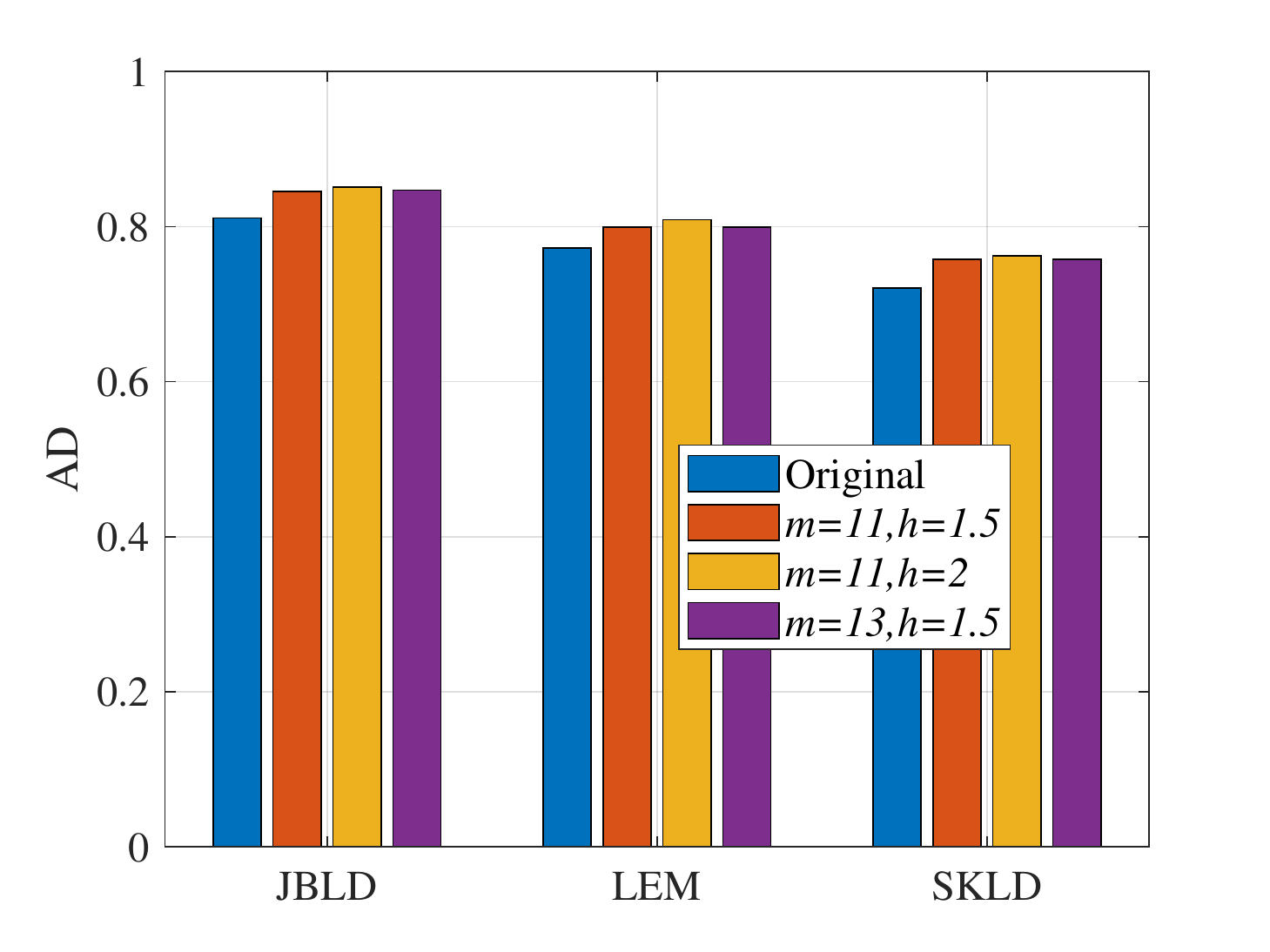}\\
  \caption{Discriminative power.}
  \label{Fig:Detection_Potential}
\end{figure}

\subsection{Detection performances}

In this subsection, numerical simulations are conducted to validate  performances of the proposed detectors in comparison with their counterparts and the AMF. To ease from huge computational load, the probability of false alarm $P_{FA}$ is chosen to be $10^{-3}$. For the Monte Carlo simulations, the detection threshold and the detection probability $P_d$ are obtained from $100/P_{FA}$ and $2000$ independent trials, respectively. In the following simulations, the dimension of matrices is $n=8$; two interferences are randomly added to the supplementary data with $\textrm{SCNR} = 10$ dB and the normalized frequency $f_d = 0.2$. Figs. \ref{JBLD_Pd}, \ref{LEM_Pd} and \ref{SKL_Pd} show the detection performances of the proposed MIG median  detectors in Gaussian and non-Gaussian clutter with two interferences. The number of sample data is $K=8$ or $12$. When $K=8$, the AMF is much worse compared with the MIG median detectors with or without manifold filter. Moreover, the MIG median  detectors with manifold filter have better performance than the MIG median  detectors without manifold filter. When $K=1.5n$, the AMF and the geometric median detectors behave very similarly while geometric median detectors with manifold filter are slightly better than the AMF.

\begin{figure*}[ht]
\centering
\subfigure[Gaussian, $K=8, n=8$] {\includegraphics[width=8cm,angle=0]{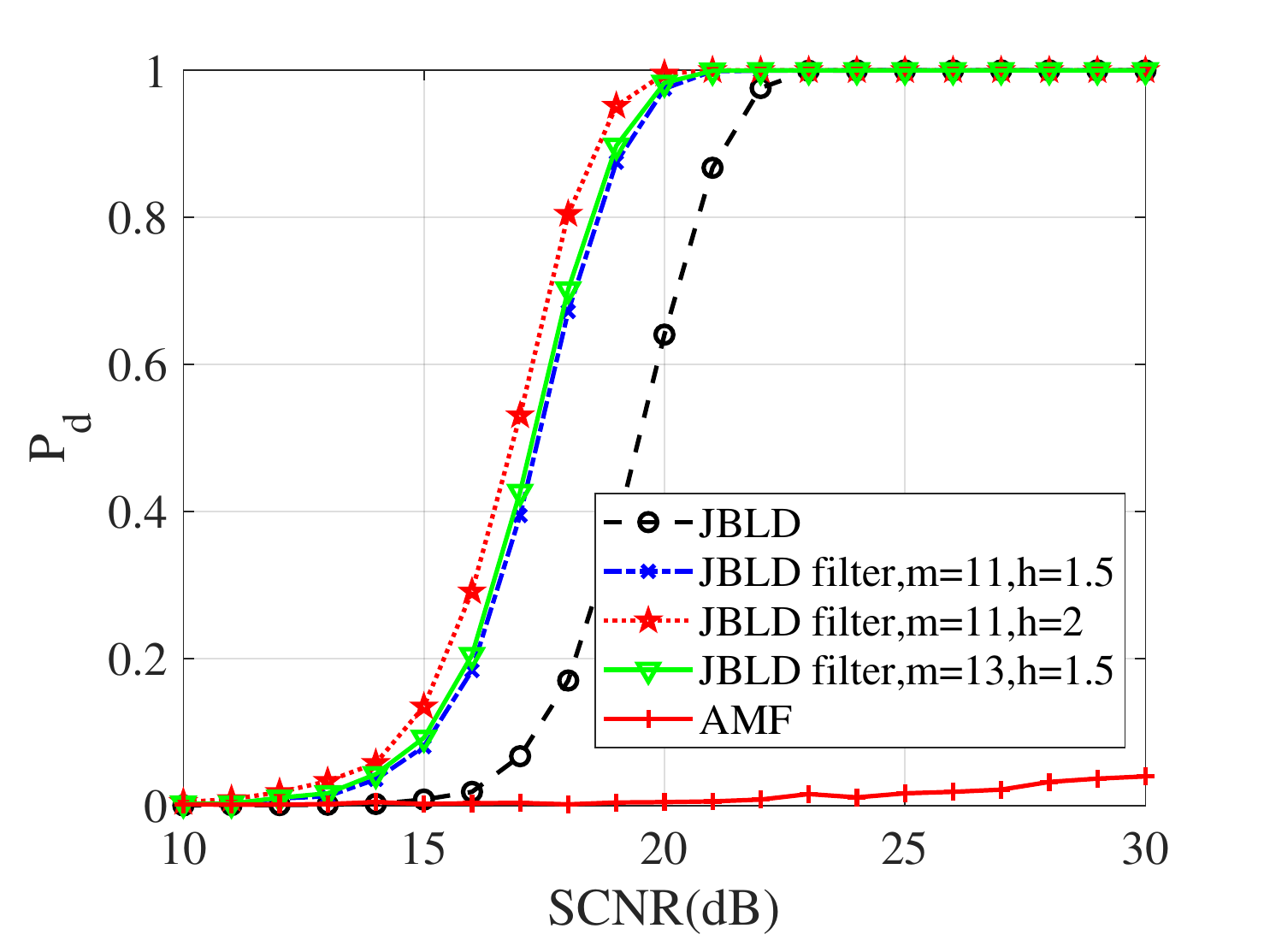}}
\subfigure[Gaussian, $K=12, n=8$] {\includegraphics[width=8cm,angle=0]{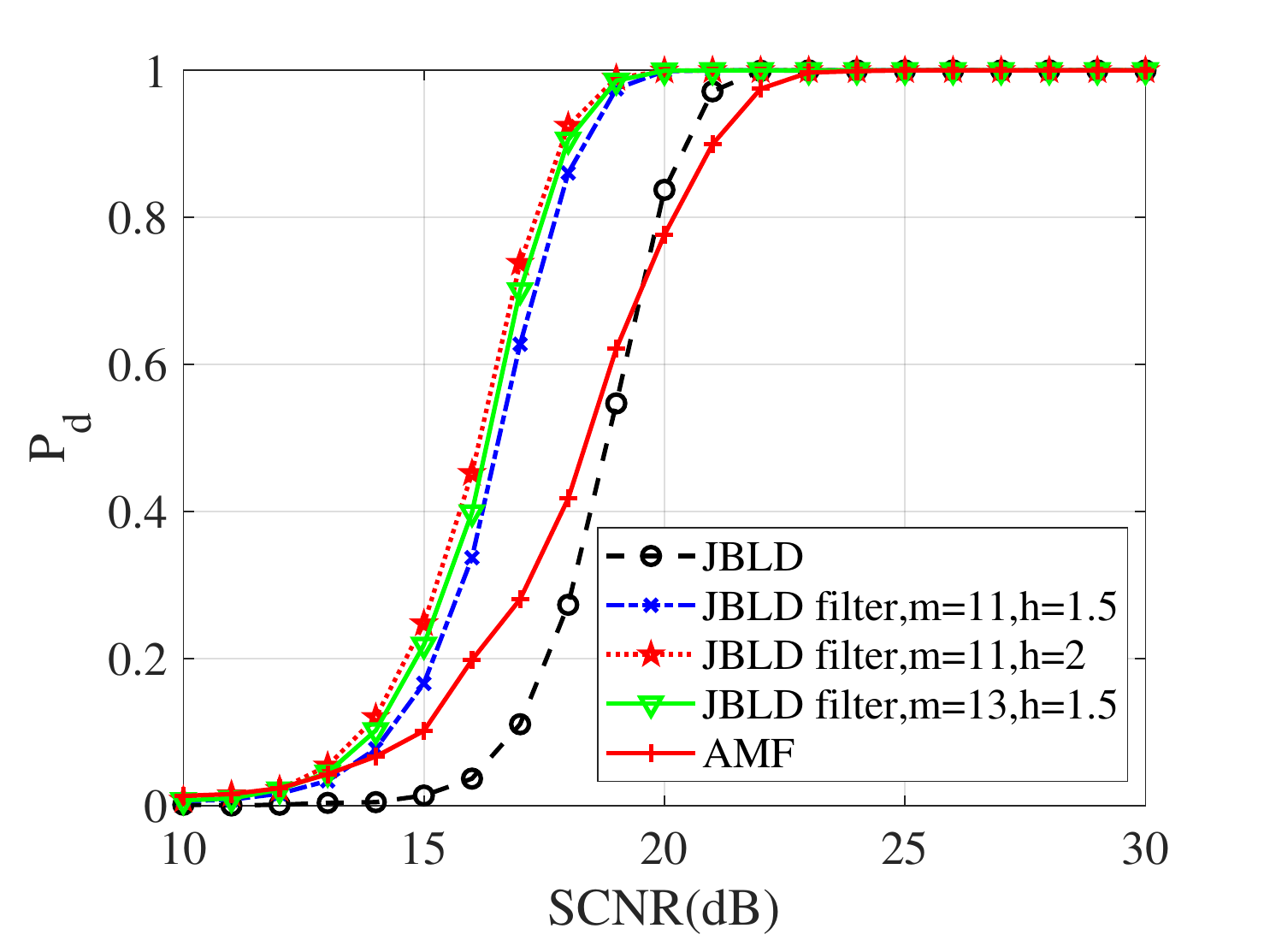}}
\subfigure[Non-Gaussian, $K=8, n=8$] {\includegraphics[width=8cm,angle=0]{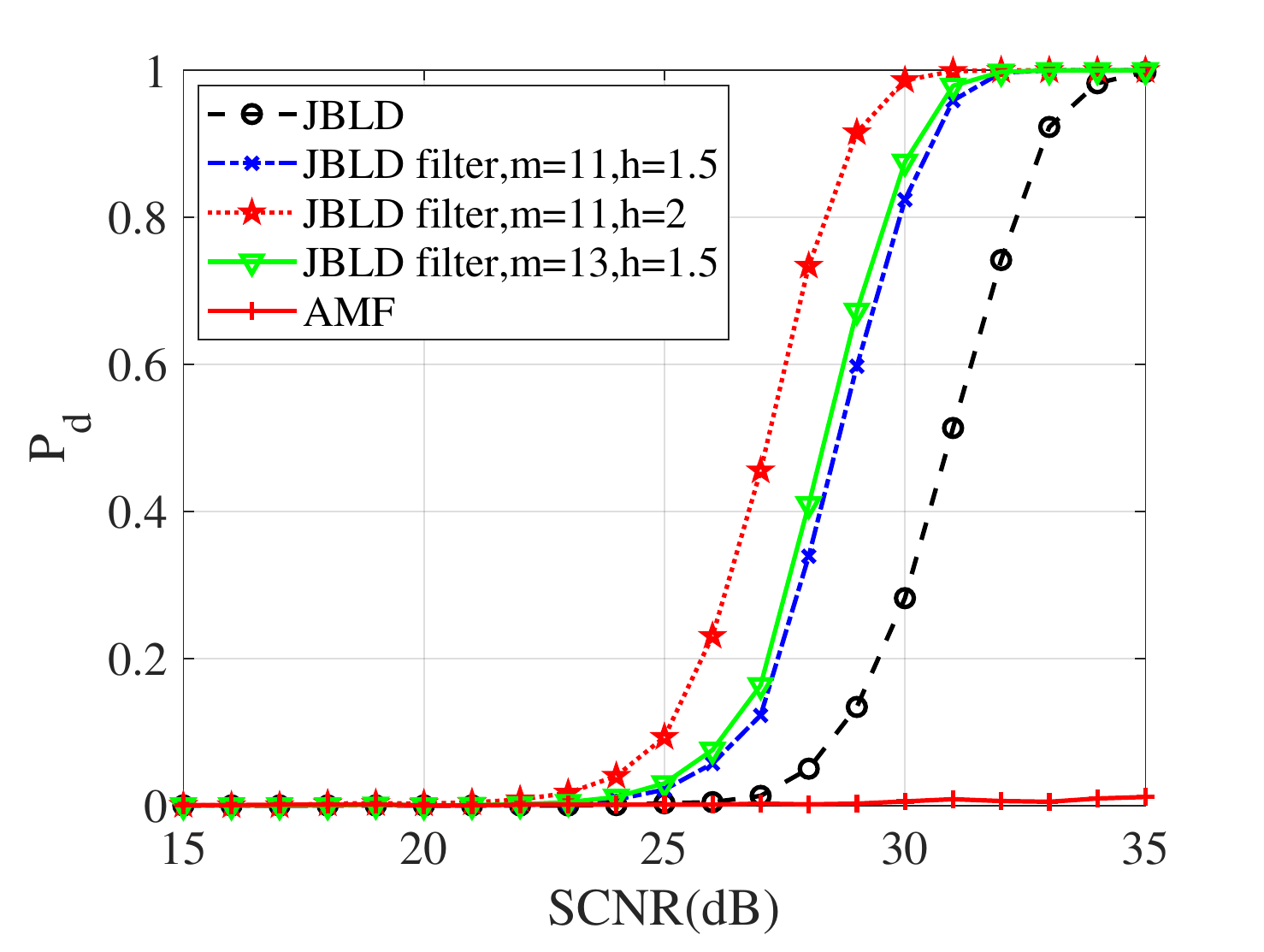}}
\subfigure[Non-Gaussian, $K=12, n=8$] {\includegraphics[width=8cm,angle=0]{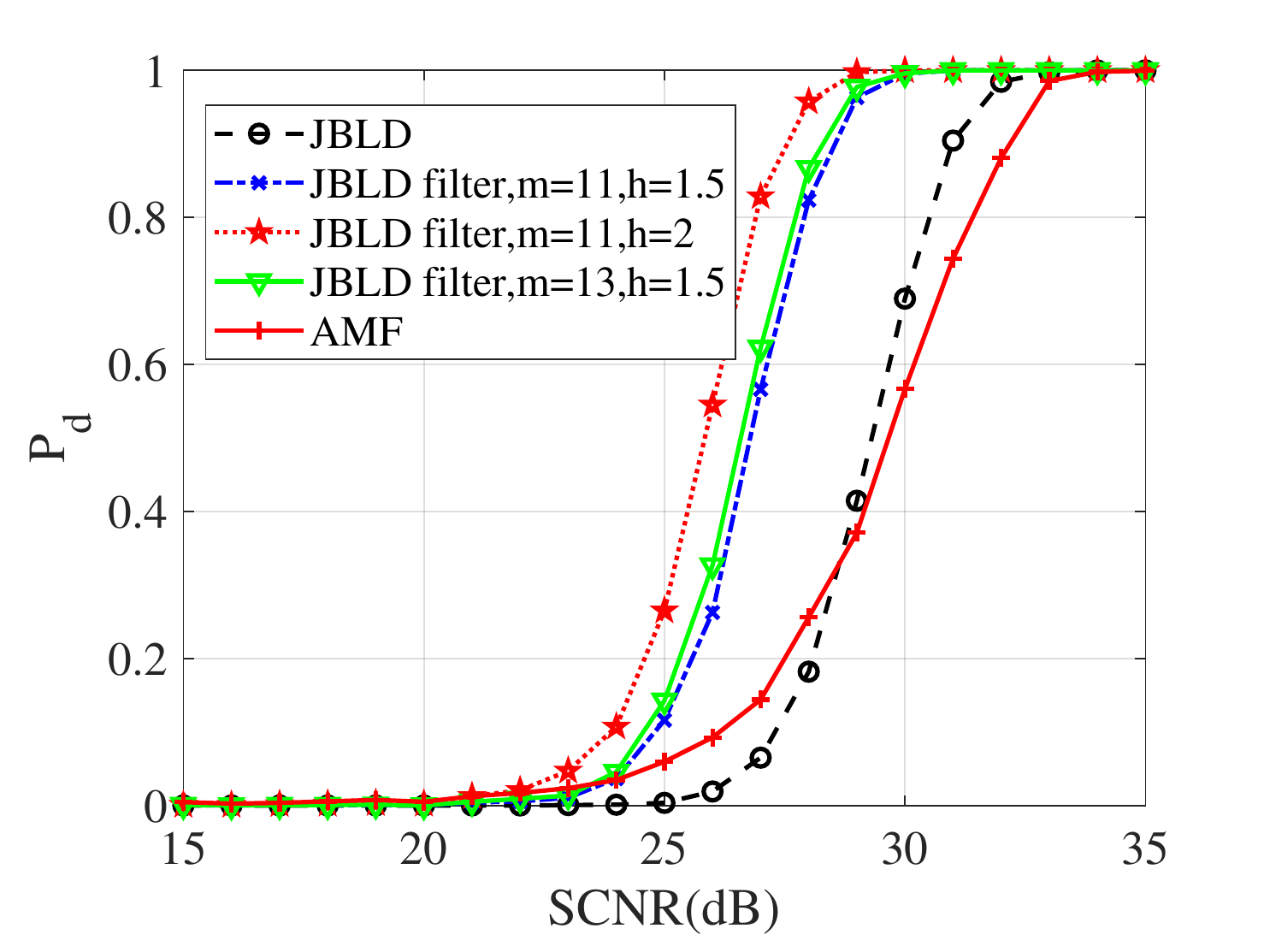}}
\caption{$P_d$ vs SCNR with JBLD divergence}
\label{JBLD_Pd}
\end{figure*}

\begin{figure*}[ht]
\centering
\subfigure[Gaussian, $K=8, n=8$] {\includegraphics[width=8cm,angle=0]{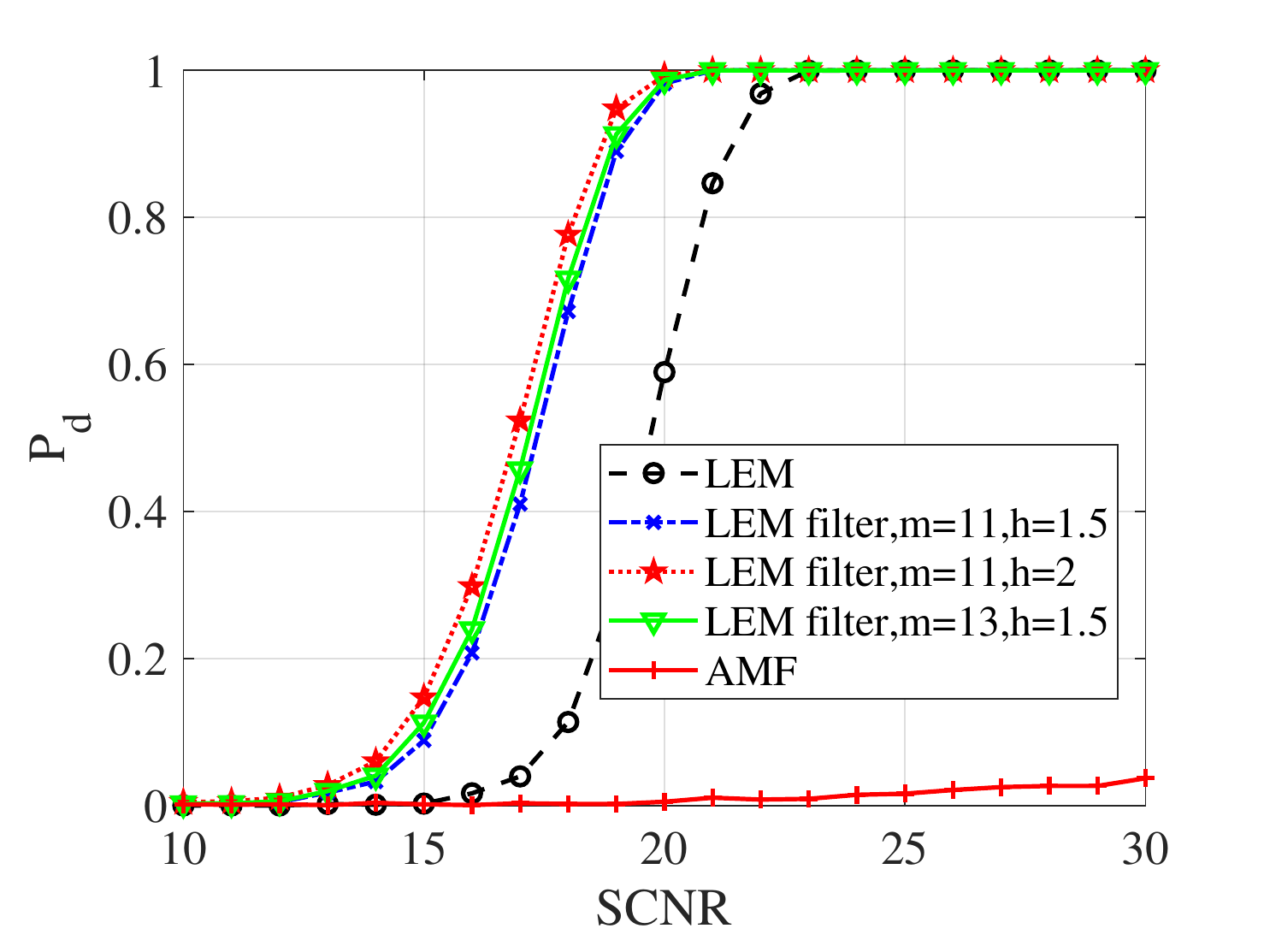}}
\subfigure[Gaussian, $K=12, n=8$] {\includegraphics[width=8cm,angle=0]{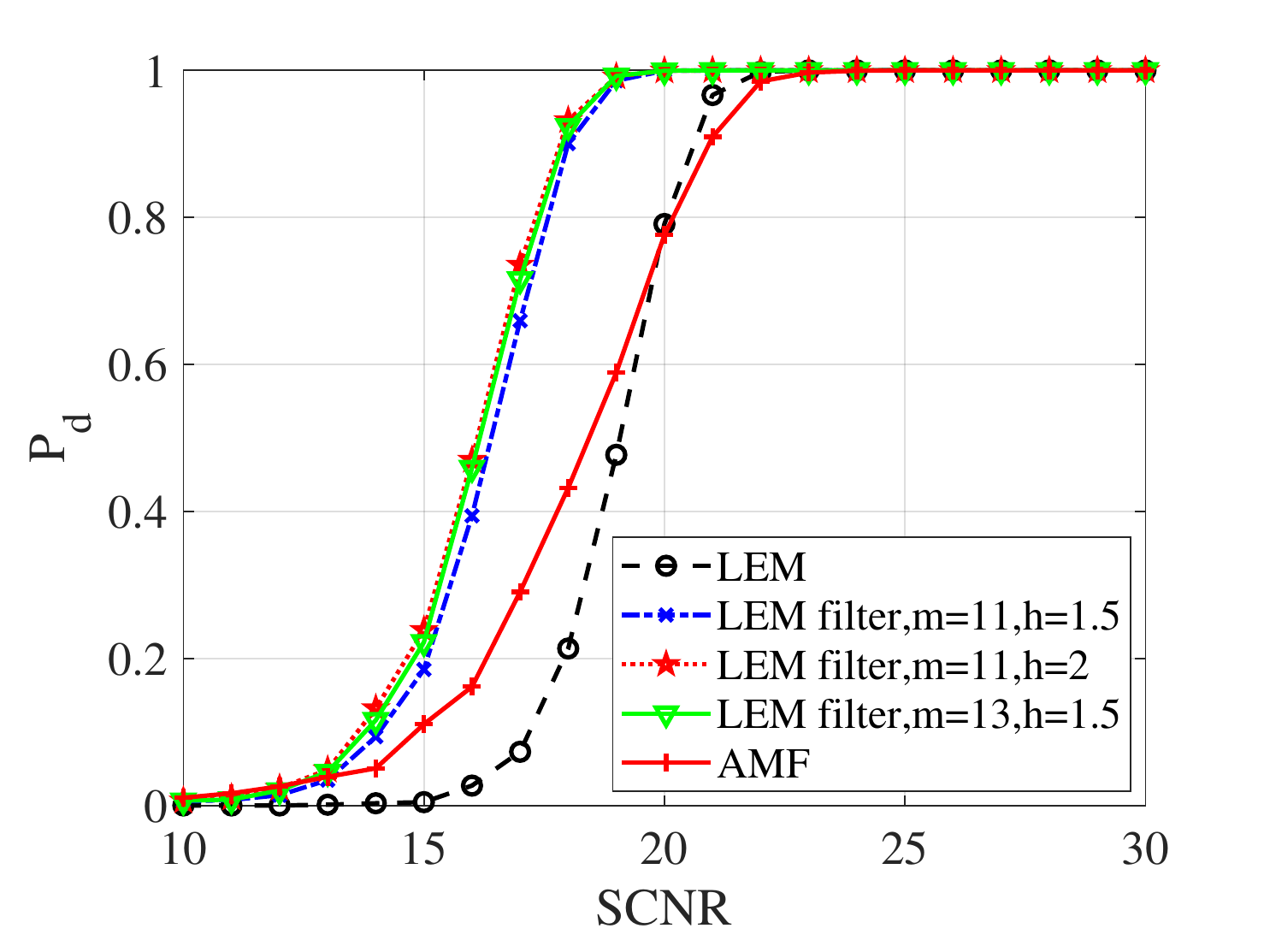}}
\subfigure[Non-Gaussian, $K=8, n=8$] {\includegraphics[width=8cm,angle=0]{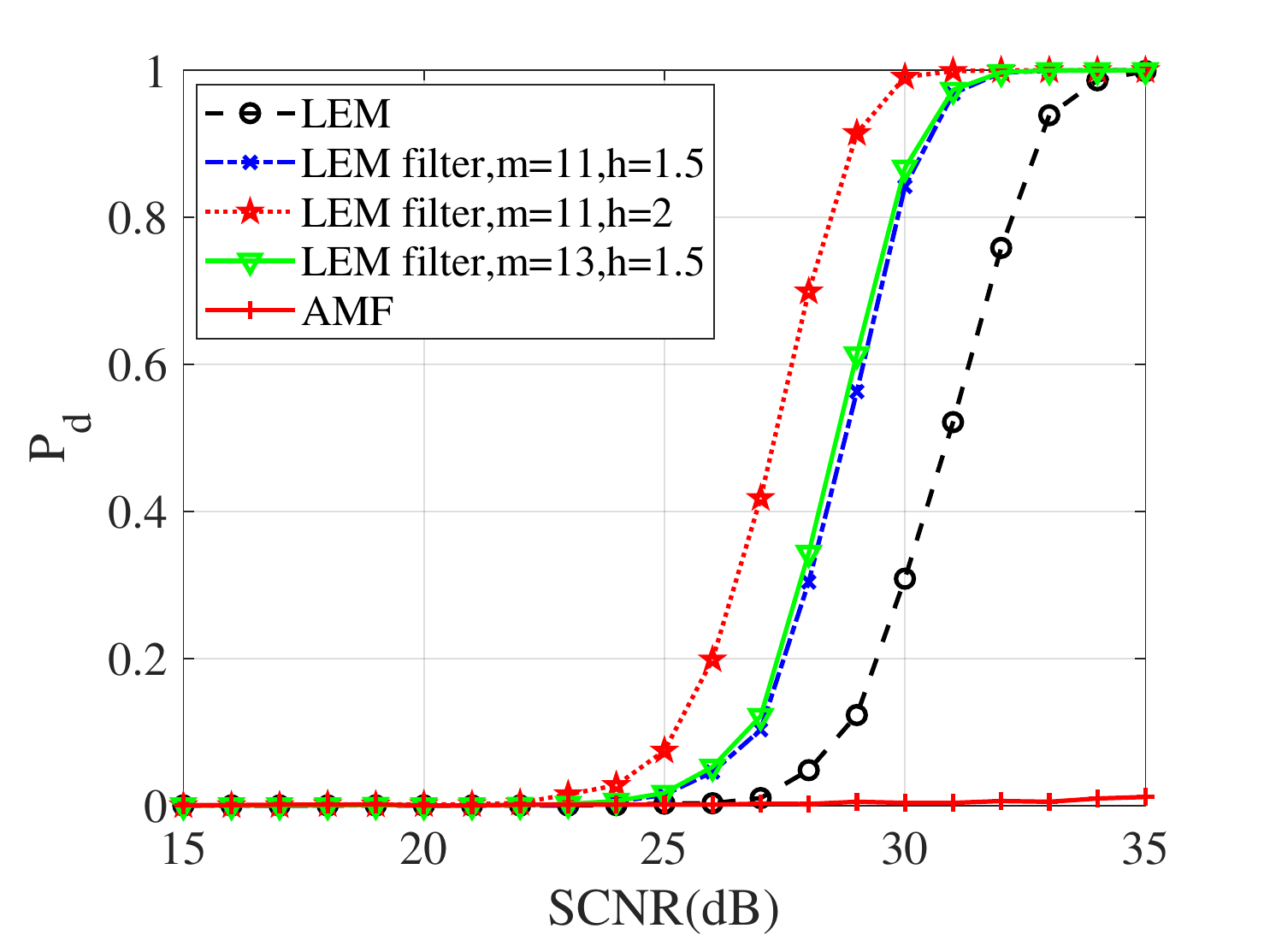}}
\subfigure[Non-Gaussian, $K=12, n=8$] {\includegraphics[width=8cm,angle=0]{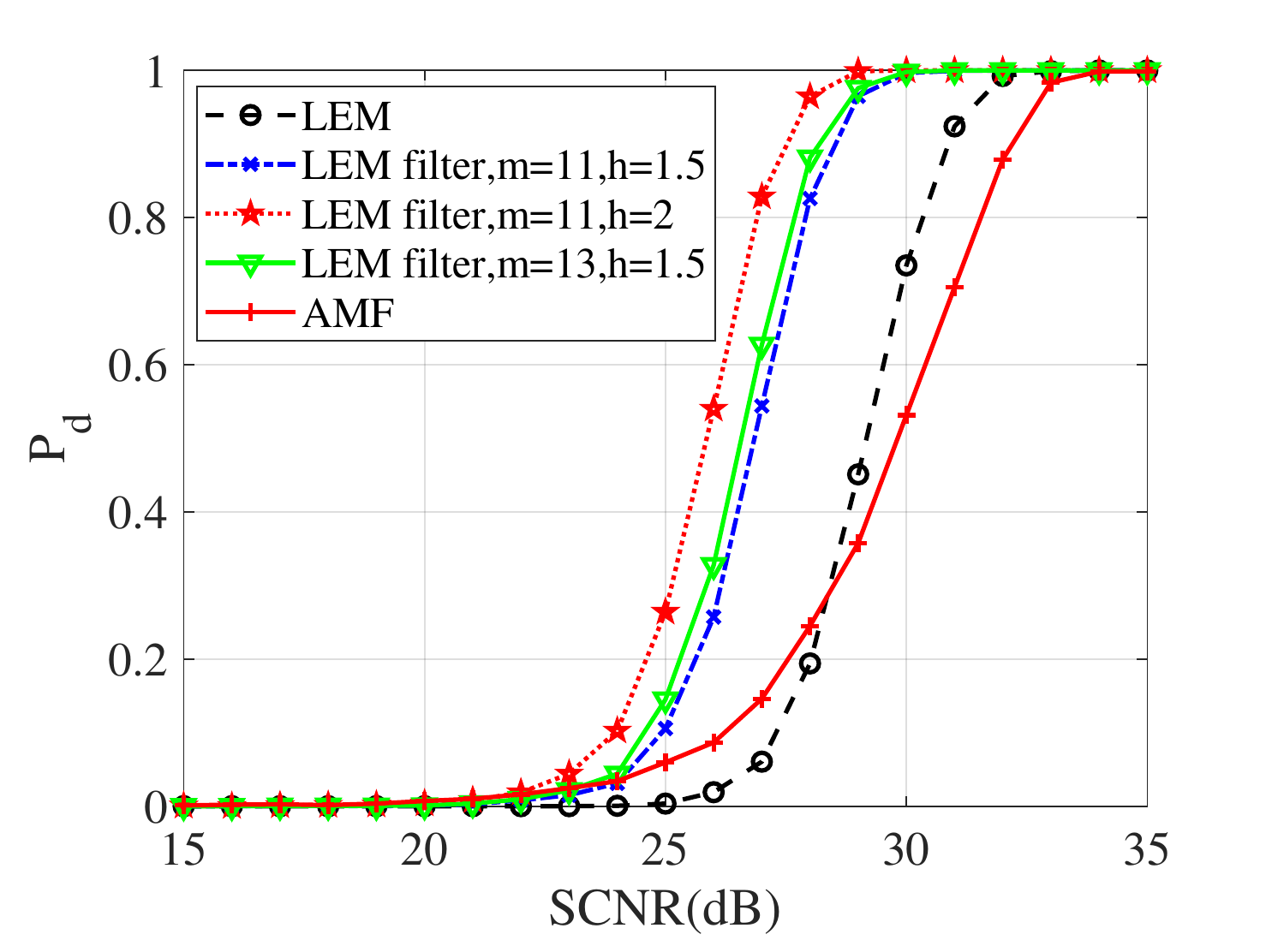}}
\caption{$P_d$ vs SCNR with LEM}
\label{LEM_Pd}
\end{figure*}

\begin{figure*}[ht]
\centering
\subfigure[Gaussian, $K=8, n=8$] {\includegraphics[width=8cm,angle=0]{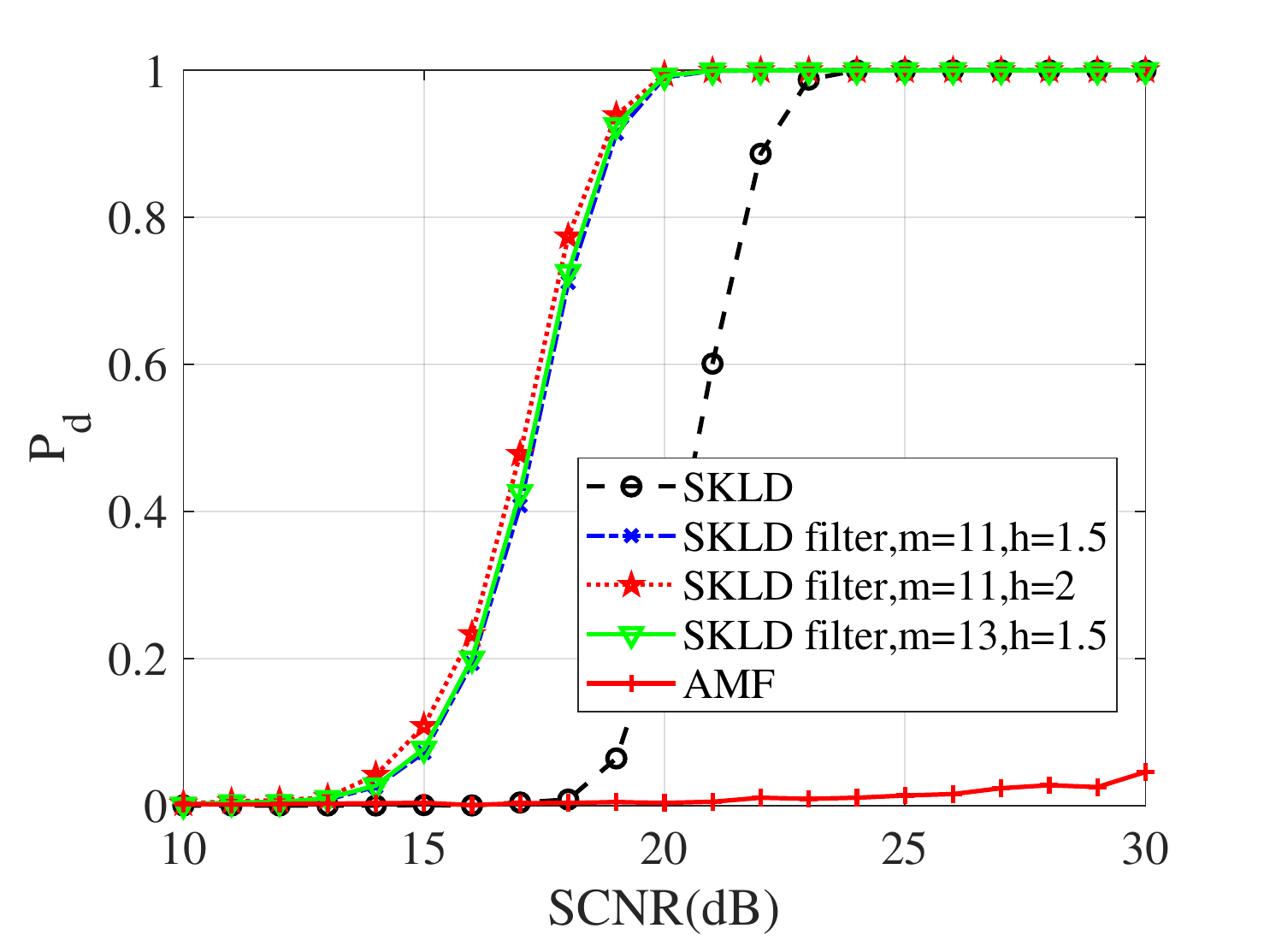}}
\subfigure[Gaussian, $K=12, n=8$] {\includegraphics[width=8cm,angle=0]{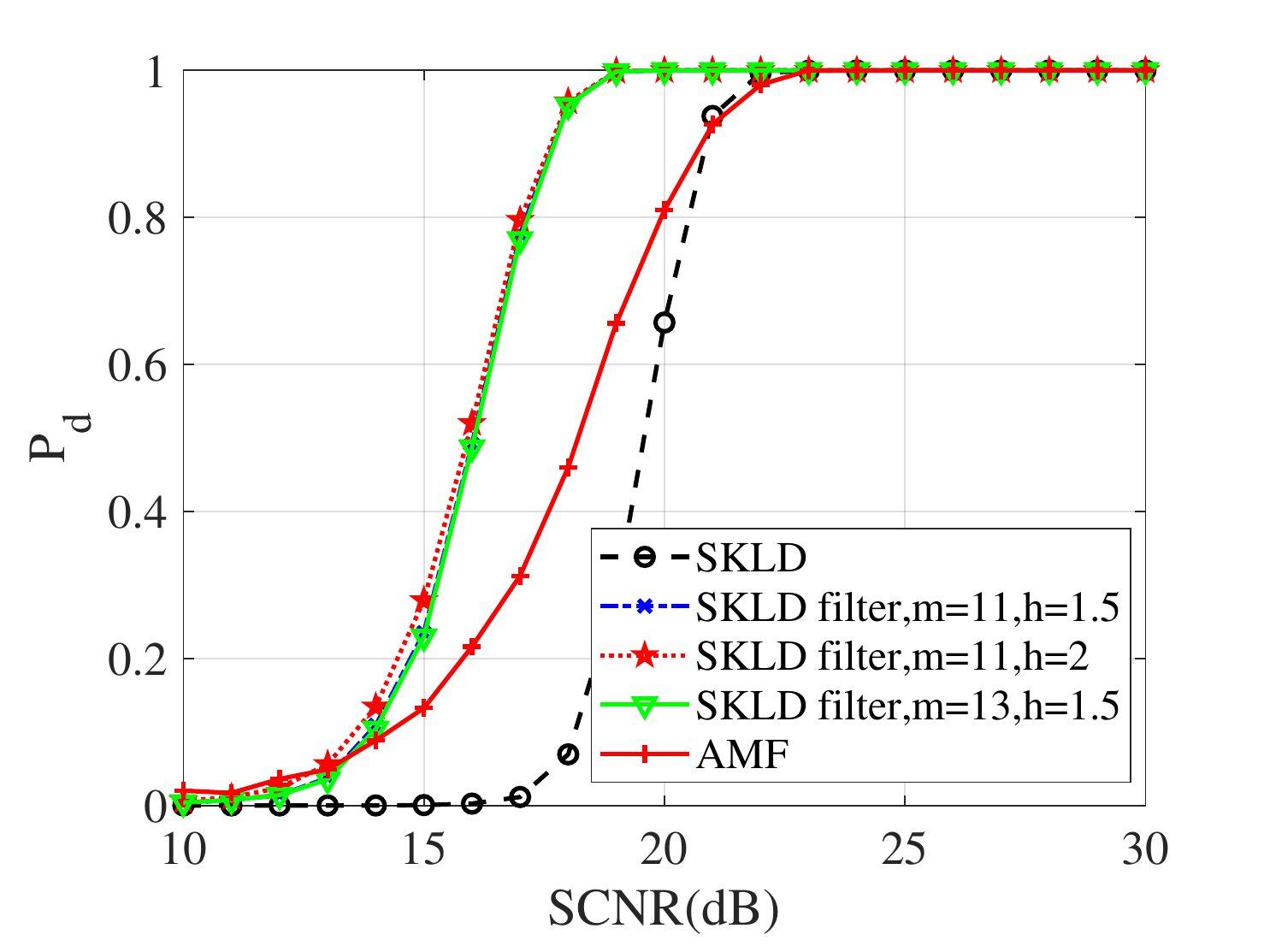}}
\subfigure[Non-Gaussian, $K=8, n=8$] {\includegraphics[width=8cm,angle=0]{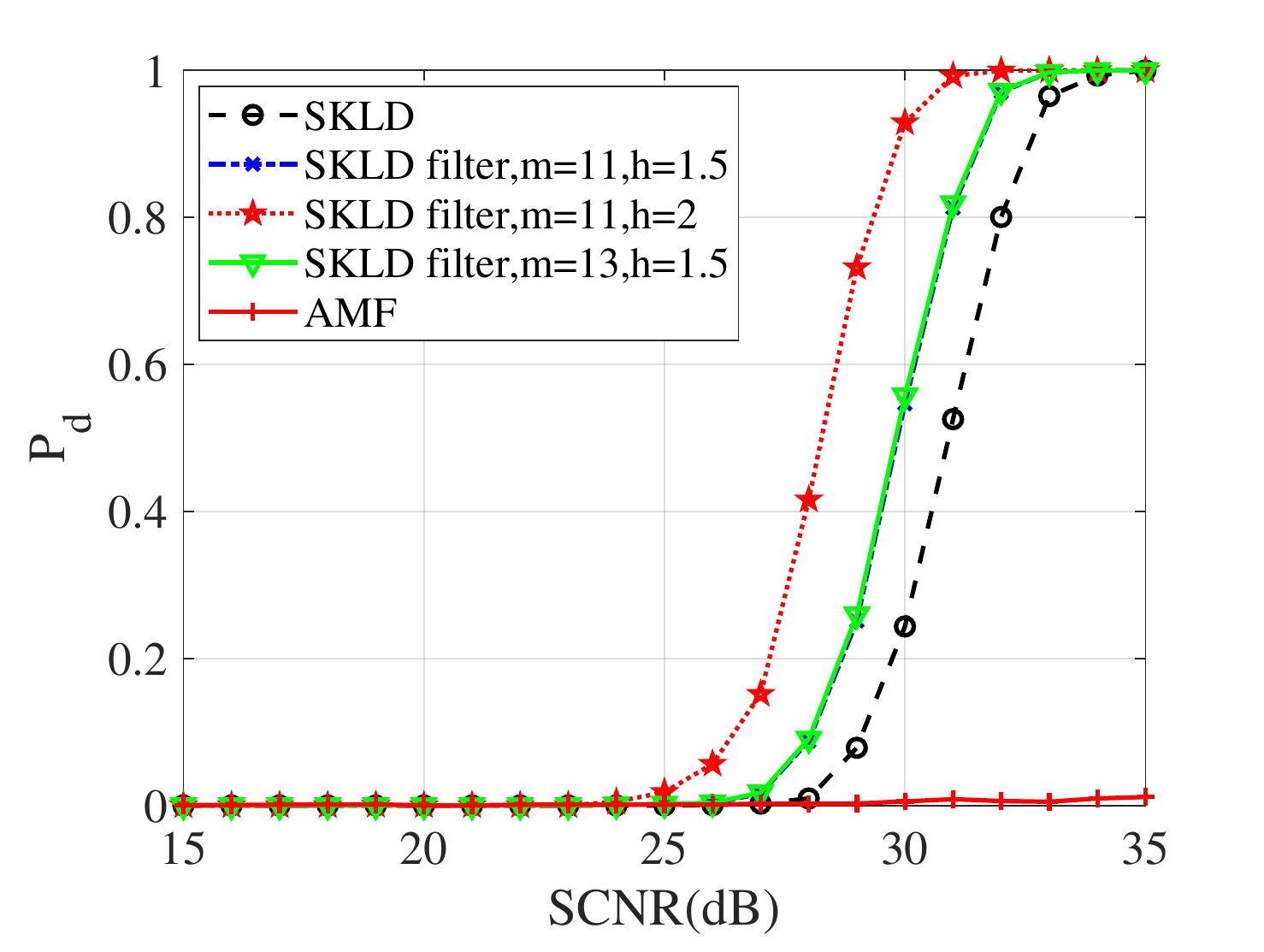}}
\subfigure[Non-Gaussian, $K=12, n=8$] {\includegraphics[width=8cm,angle=0]{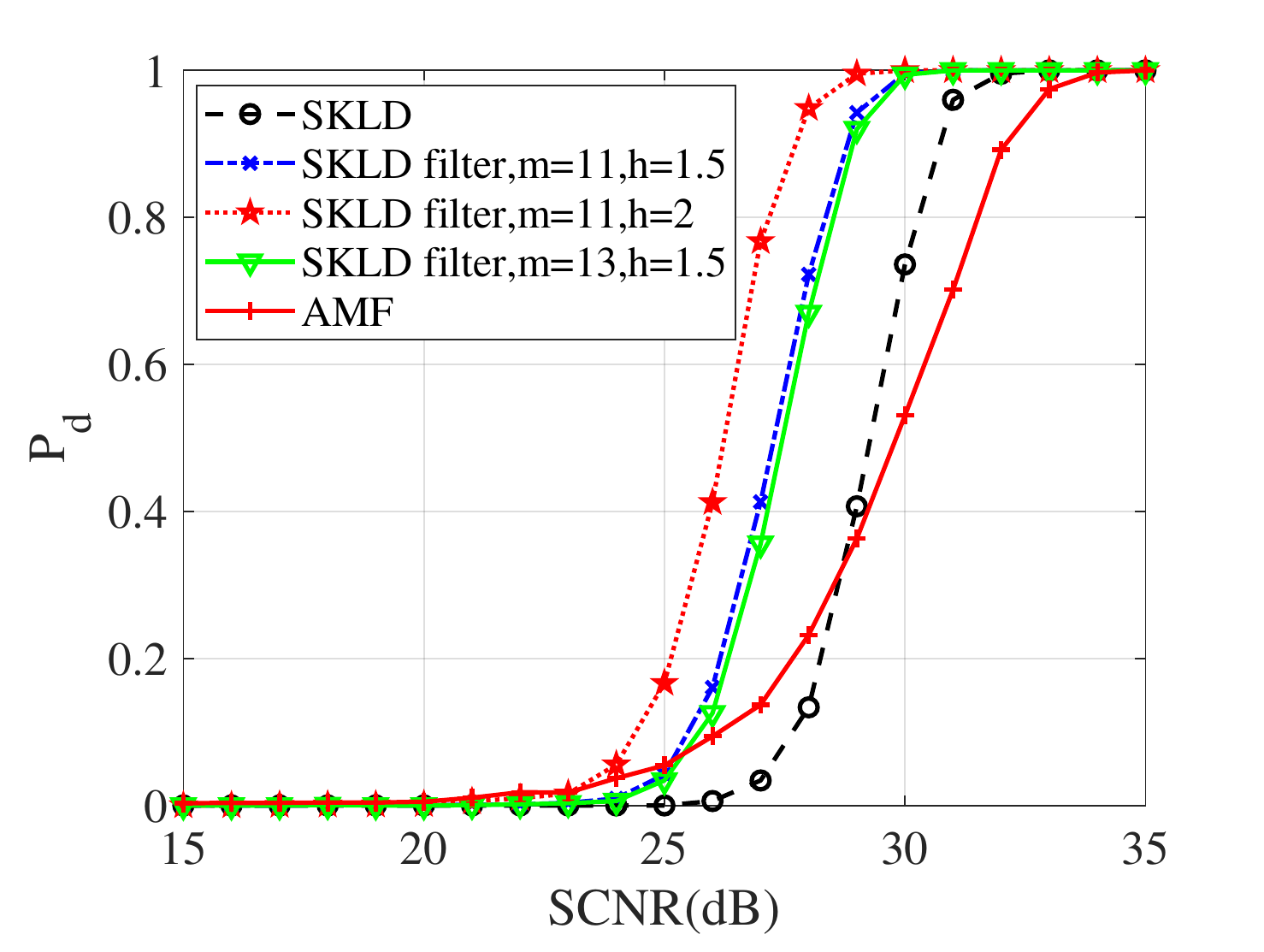}}
\caption{$P_d$ vs SCNR with SKLD}
\label{SKL_Pd}
\end{figure*}

\section{Conclusions}
\label{sec:con}

In this paper, we designed a class of MIG median  detectors with manifold filter to detect signals embedded in nonhomogeneous environments. The sample data was modeled as an HPD matrix with the Toeplitz structure. Particularly, the clutter covariance matrix was estimated as a geometric median for a set of HPD matrices w.r.t. various geometric measures. Then, the problem of signal detection was formulated into the discrimination of two points on the HPD matrix manifold. Furthermore, a manifold-to-manifold map was defined by using the manifold filter to alter the manifold structure, that consequently improved the discrimination power on the matrix manifold. Numerical simulations showed the performance advantages of those MIG median detectors compared with the AMF and their state-of-the-art counterparts.  
Similar to the AMF, MIG detectors can also be widely applied in practical problems, for instance, target detection in sea or ground clutter, sonar signal detection, and signal detection in communication system.  Potential future research will concern the subspace signal detection and their applications for range distributed target detection and synthetic aperture radar image target detection.

\section*{Acknowledgements}
This work was partially supported by the National Natural Science Foundation of China under Grant No. 61901479 and JST-CREST. L. Peng  is  an adjunct faculty member of Waseda Institute for Advanced Study, Waseda University, Japan, and School of Mathematics and Statistics, Beijing Institute of Technology, China.


\bibliographystyle{plain}
\bibliography{mybibfile}

\begin{appendix}
\section{Anisotropy index of the JBLD divergence: Proof of Proposition \ref{prop:aiJ} }
\label{appen:A}

Denote $f(\varepsilon)$ as the function to be minimized, namely
\begin{equation*}
f(\varepsilon) = \operatorname{ln}\operatorname{det}\bigg(\frac{\bm{R} + \varepsilon\bm{I}}{2}\bigg) - \frac{1}{2}\operatorname{ln}\operatorname{det}(\varepsilon\bm{R}),
\end{equation*}
whose derivative can be computed directly and reads
\begin{equation*}
\begin{aligned}
f'(\varepsilon)= \operatorname{tr}\left( \left(\bm{R}+\varepsilon\bm{I} \right)^{-1} -\frac{1}{2\varepsilon}\bm{I}\right) .
\end{aligned}
\end{equation*}
Let $\{\lambda_1,\lambda_2,\ldots,\lambda_n\}$ be the eigenvalues of the matrix $\bm{R}$, and we have
\begin{equation*}
\operatorname{tr}\Big( ( \bm{R}+\varepsilon\bm{I} )^{-1} \Big) = \sum_{i=1}^n \frac{1}{\lambda_i + \varepsilon}.
\end{equation*}
Then, the equation $f'(\varepsilon)=0$ becomes
\begin{equation*}
2\sum_{i=1}^n \frac{1}{\lambda_i + \varepsilon} = \frac{n}{\varepsilon}.
\end{equation*}
It can be rearranged as an $n$-th order polynomial for the unknown variable $\varepsilon$, which has one unique positive solution for all $n\geq 1$, denoted by $\varepsilon^*$. The unique existence is verified in the next Lemma \ref{lem:npe}.

\begin{lem}
\label{lem:npe}
The rational equation
\begin{equation*}
2\sum_{i=1}^n \frac{1}{\lambda_i + \varepsilon} = \frac{n}{\varepsilon}
\end{equation*}
has only one positive solution $\varepsilon^*$ for all $n\geq 1$.
\end{lem}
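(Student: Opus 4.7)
The plan is to reduce the equation to one whose left-hand side is manifestly a strictly monotone function of $\varepsilon$, and then to finish by a standard intermediate-value argument. Since the $\lambda_i$ are eigenvalues of the HPD matrix $\bm{R}$, they are all strictly positive, which is the only property of the $\lambda_i$ that I will use.

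First, I would multiply both sides by $\varepsilon>0$ to rewrite the equation as
\begin{equation*}
h(\varepsilon):=\sum_{i=1}^n\frac{2\varepsilon}{\lambda_i+\varepsilon}=n.
\end{equation*}
The point of this rewriting is that each summand admits the decomposition
\begin{equation*}
\frac{2\varepsilon}{\lambda_i+\varepsilon}=2-\frac{2\lambda_i}{\lambda_i+\varepsilon},
\end{equation*}
so each summand is a strictly increasing function of $\varepsilon$ on $(0,\infty)$ (its derivative is $2\lambda_i/(\lambda_i+\varepsilon)^2>0$). Hence $h$ is continuous and strictly increasing on $(0,\infty)$.

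Next I would compute the boundary behaviour: $\lim_{\varepsilon\to 0^+}h(\varepsilon)=0$ and $\lim_{\varepsilon\to\infty}h(\varepsilon)=2n$. Since $n\in(0,2n)$, the intermediate value theorem combined with strict monotonicity yields exactly one $\varepsilon^*\in(0,\infty)$ with $h(\varepsilon^*)=n$, which is equivalent to the original equation. This $\varepsilon^*$ is the unique positive solution for every $n\geq 1$.

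There is no real obstacle here; the only subtlety is that attacking the equation directly through the derivative of $g(\varepsilon)=2\sum_i(\lambda_i+\varepsilon)^{-1}-n/\varepsilon$ does not immediately reveal monotonicity, because $g'$ changes sign bounds in a non-obvious way. The trick is to multiply through by $\varepsilon$ first so that the unknown appears inside each summand in a manifestly monotone combination; after that the proof is a one-line application of the intermediate value theorem.
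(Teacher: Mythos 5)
Your proof is correct, and it takes a genuinely different route from the paper's. You multiply through by $\varepsilon$ and observe that $h(\varepsilon)=\sum_{i=1}^n 2\varepsilon/(\lambda_i+\varepsilon)$ is a sum of strictly increasing functions (each term being $2-2\lambda_i/(\lambda_i+\varepsilon)$ with derivative $2\lambda_i/(\lambda_i+\varepsilon)^2>0$, using only $\lambda_i>0$), with $h(0^+)=0$ and $h(\varepsilon)\to 2n$, so the intermediate value theorem and strict monotonicity give exactly one solution of $h(\varepsilon)=n$. The paper instead clears all denominators to obtain a polynomial $g_n(\varepsilon)=\sum_{k=0}^n(n-2k)\varepsilon^{n-k}\sum_{i_1<\cdots<i_k}\lambda_{i_1}\cdots\lambda_{i_k}$, verified by induction, and then argues through repeated differentiation: the coefficient sequence has a single sign change (positive for $k<n/2$, negative for $k>n/2$), so $g_n^{(n/2)}>0$ on $(0,\infty)$, whence $g_n^{(n/2-1)}$ crosses the axis exactly once, and descending derivative by derivative one concludes $g_n$ itself has exactly one positive root --- essentially a hands-on instance of Descartes' rule of signs. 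Your monotonization argument is shorter and more elementary: it avoids the combinatorial computation of the coefficients and the even/odd case split entirely, and it makes the uniqueness mechanism transparent. What the paper's approach buys in exchange is the explicit polynomial form of the equation, which exhibits the root as a root of a concrete degree-$n$ polynomial in the elementary symmetric functions of the eigenvalues; but for the purpose of the lemma as stated, your argument is fully rigorous and preferable in economy.
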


\begin{proof}
First of all, we rearrange the equation into the vanishment of a polynomial $g_n(\varepsilon)=0$ ($n\geq 1$)  where
\begin{equation*}
\begin{aligned}
g_n(\varepsilon)&=2\varepsilon\left(\sum_{i=1}^n(\lambda_1+\varepsilon)(\lambda_2+\varepsilon)\cdots (\widehat{\lambda_i+\varepsilon})\cdots (\lambda_n+\varepsilon)\right) -n(\lambda_1+\varepsilon)(\lambda_2+\varepsilon)\cdots (\lambda_n+\varepsilon),
\end{aligned}
\end{equation*}
satisfying
\begin{equation*}
2\sum_{i=1}^n \frac{1}{\lambda_i + \varepsilon} - \frac{n}{\varepsilon}=\frac{g_n(\varepsilon)}{\varepsilon \prod_{i=1}^n(\lambda_i + \varepsilon) }.
\end{equation*}
Here $\widehat{\lambda_i+\varepsilon}$ means this term does not appear in the calculation.
As all eigenvalues are positive, searching for positive solutions of the rational equation is equivalent to finding positive solutions of $g_n(\varepsilon)=0$.

By  mathematical induction,   we have
\begin{equation*}
\begin{aligned}
g_n(\varepsilon)&=n\varepsilon^n+(n-2)\varepsilon^{n-1}\sum_{i=1}^n\lambda_i+(n-4)\varepsilon^{n-2}\sum_{ i_1<i_2}\lambda_{i_1}\lambda_{i_2}\\
&~~~~+\cdots+(n-2k)\varepsilon^{n-k}\sum_{i_1<i_2<\cdots <i_k}\lambda_{i_1}\lambda_{i_2}\cdots \lambda_{i_k} +\cdots+(-n)\prod_{i=1}^n\lambda_1\lambda_2\cdots\lambda_n\\
&=\sum_{k=0}^n(n-2k)\varepsilon^{n-k}\sum_{i_1<i_2<\cdots <i_k}\lambda_{i_1}\lambda_{i_2}\cdots \lambda_{i_k}.
\end{aligned}
\end{equation*}

We will consider when $n$ is even only; the proof for an odd $n$ is analogous. Note that there is no $\varepsilon^{n/2}$ term in $g_n(\varepsilon)$, and all coefficients for the terms $\varepsilon^{n-k}$, $k=0,1,2,\ldots,n/2-1$, are positive  while all coefficients for the terms $\varepsilon^{n-k}$, $k=n/2+1,n/2+2,\ldots,n$, are negative.  Starting from $(n/2)$-th order derivative of $g_n(\varepsilon)$, we have
\begin{equation*}
g_n^{(n/2)}(\varepsilon)>0,\text{ for all } \varepsilon>0.
\end{equation*}
Furthermore, since
\begin{equation*}
g_n^{(n/2-1)}(0)<0 \text{ and } \lim\limits_{\varepsilon\rightarrow +\infty}g_n^{(n/2-1)}(\varepsilon)=+\infty,
\end{equation*}
 we know  $g_n^{(n/2-1)}(\varepsilon)$ passes the $\varepsilon$-axis once and only once for $\varepsilon>0$.
Again, since
\begin{equation*}
g_n^{(n/2-2)}(0)<0 \text{ and } \lim\limits_{\varepsilon\rightarrow +\infty}g_n^{(n/2-2)}(\varepsilon)=+\infty,
\end{equation*}
the function $g_n^{(n/2-2)}(\varepsilon)$ decreases from a negative value and then increases to positive infinity. Consequently, $g_n^{(n/2-2)}(\varepsilon)$ also passes the $\varepsilon$-axis once and only once for $\varepsilon>0$.

Continuing this procedure of analysis, we conclude that the function $g_n(\varepsilon)$   passes the $\varepsilon$-axis only once for $\varepsilon>0$. Namely, $g_n(\varepsilon)=0$ has only one positive solution.

\end{proof}

\end{appendix}

\end{document}